\def\bbm[#1]{\mbox{\boldmath $#1$}}      %per il grassetto matematico
\def\mpc[#1]{\marginpar{\color[rgb]{0.07,0.66,0.11} #1}}%\textcolor[rgb]{0.07,0.66,0.11}{}
\def\mpcp[#1]{\marginpar{\color[rgb]{1.00,0.50,0.00} \footnotesize{#1}}}
\def \X{\mathbf{X}}\def \kit{t}
\def \x{\mathbf{x}}\def \z{\mathbf{z}}
\def \Y{\mathbf{Y}}
\def\yiji{\y_{i,j}^i}
\def\yijj{\y_{i,j}^j}
\def \a{\mathbf{a}}\def \y{\mathbf{y}}
\def \d{\mathbf{d}}
\def\A{\mathbf{A}}
\def \e{\mathbf{e}}\def\p{\mathbf{p}}\def\w{\mathbf{w}}\def\q{\mathbf{q}}
 \def\la{\mathbold{\lambda}}
\def\nnorm[#1, #2]{\mbox{$\left. \vline\,\vline\,\displaystyle #1 \,\vline\,\vline \right._{#2}\,$}}
\def\La{\mathcal{L}}
\DeclareMathOperator*{\minimize}{\mathrm{minimize}}
\newtheorem{theorem}{Theorem}
\newtheorem{proposition}{Proposition}
\newtheorem{assumption}{Assumption}
\newtheorem{remark}{Remark} 
\def\transp{\mathsf{T}}
\DeclareMathOperator*{\argmin}{\mathrm{argmin}}
\newtheorem{corollary}{Corollary}
\begin{document}

\title{Distributed Maximum Likelihood \\Sensor Network Localization}%On Distributed Optimization, Convex Relaxations, and Sensor Network Localization Problems}
%
%
% author names and IEEE memberships
% note positions of commas and nonbreaking spaces ( ~ ) LaTeX will not break
% a structure at a ~ so this keeps an author's name from being broken across
% two lines.
% use \thanks{} to gain access to the first footnote area
% a separate \thanks must be used for each paragraph as LaTeX2e's \thanks
% was not built to handle multiple paragraphs
%

\author{Andrea~Simonetto* and~Geert~Leus

%Michael~Shell,~\IEEEmembership{Member,~IEEE,}
        %John~Doe,~\IEEEmembership{Fellow,~OSA,}
        %and~Jane~Doe,~\IEEEmembership{Life~Fellow,~IEEE}% <-this % stops a space

\thanks{The authors are with the Faculty of Electrical Engineering, Mathematics and Computer Science, Delft University of Technology, 2826 CD Delft, The Netherlands. e-mails: \footnotesize{$\{$a.simonetto, g.j.t.leus$\}$@tudelft.nl}. * Corresponding author: Andrea Simonetto, phone: (+31)152782845, fax: (+31)152786190, e-mail: \footnotesize{a.simonetto@tudelft.nl}. This research was supported in part by STW under the D2S2 project from the ASSYS program (project 10561).}}

\maketitle

\begin{abstract}
\boldmath
We propose a class of convex relaxations to solve the sensor network localization problem, based on a maximum likelihood (ML) formulation. This class, as well as the tightness of the relaxations, depends on the noise probability density function (PDF) of the collected measurements. We derive a computational efficient edge-based version of this ML convex relaxation class and we design a distributed algorithm that enables the sensor nodes to solve these edge-based convex programs locally by communicating only with their close neighbors. This algorithm relies on the alternating direction method of multipliers (ADMM), it converges to the centralized solution, it can run asynchronously, and it is computation error-resilient. Finally, we compare our proposed distributed scheme with other available methods, both analytically and numerically, and we argue the added value of ADMM, especially for large-scale networks. 

\end{abstract}
% IEEEtran.cls defaults to using nonbold math in the Abstract.
% This preserves the distinction between vectors and scalars. However,
% if the journal you are submitting to favors bold math in the abstract,
% then you can use LaTeX's standard command \boldmath at the very start
% of the abstract to achieve this. Many IEEE journals frown on math
% in the abstract anyway.

% Note that keywords are not normally used for peerreview papers.
\begin{IEEEkeywords}
Distributed optimization, convex relaxations, sensor network localization, distributed algorithms, ADMM, distributed localization, sensor networks, maximum likelihood. 
\end{IEEEkeywords}
\vskip0.2cm
\begin{center}\textbf{EDICS Category: SEN-DIST, SEN-COLB}\end{center}

\section{Introduction}\label{sec:introduction}
Nowadays, wireless sensor networks are developed to provide fast, cheap, reliable, and scalable hardware solutions to a large number of industrial applications, ranging from surveillance \cite{Biswas2006, Raty2010} and tracking \cite{Songhwai2007, Liu2007a} to exploration \cite{Sun2005, Leonard2007}, monitoring \cite{Corke2010, Sun2011}, robotics~\cite{Zhou2011}, and other sensing tasks \cite{Arampatzis2005}. From the software perspective, an increasing effort is spent on designing distributed algorithms that can be embedded in these sensor networks, providing high reliability with limited computation and communication requirements for the sensor nodes. Estimating the location of the nodes based on pair-wise distance measurements is regarded as a key enabling technology in many of the aforementioned scenarios, where GPS is often not employable. 

From a strictly mathematical standpoint, this sensor network localization problem can be formulated as determining the node position in $\mathbb{R}^2$ or $\mathbb{R}^3$ ensuring their consistency with the given inter-sensor distance measurements and (in some cases) with the location of known anchors. As it is well known, such a fixed-dimensional problem (often phrased as a polynomial optimization) is NP-hard in general. Consequently, there have been significant research efforts in developing algorithms and heuristics that can accurately and efficiently localize the nodes in a given dimension~\cite{Langendoen2003, Patwari2005, Mao2007}. Besides heuristic geometric schemes, such as multi-lateration, typical methods encompass multi-dimensional scaling~\cite{Shang2003,Cheung2005}, belief propagation techniques~\cite{Wymeersch2009}, and standard non-linear filtering~\cite{Cattivelli2010}.  

A very powerful approach to the sensor network localization problem is to use convex relaxation techniques to massage the non-convex problem to a more tractable yet approximate formulation. First adopted in~\cite{Doherty2001}, this modus operandi has since been extensively developed in the literature (see for example~\cite{Luo2010} for a comprehensive survey in the field of signal processing). Semidefinite programming (SDP) relaxations for the localization problem have been proposed in~\cite{Biswas2004, Biswas2006a, Weinberger2006, Sun2006,Wang2008,Lui2009,Pong2011,Pong2012}. Theoretical properties of these methods have been discussed in~\cite{So2007, Javanmard2011, Shamsi2012}, while their efficient implementation has been presented in~\cite{Fukuda2001, Nakata2003, Weinberger2007, Leung2009, Kim2009}. Further convex relaxations, namely second-order cone programming relaxations (SOCP) have been proposed in~\cite{Tseng2007} to alleviate the computational load of standard SDP relaxations, at the price of some performance degradation. Highly accurate and highly computational demanding sum of squares (SOS) convex relaxations have been instead employed in~\cite{Nie2009}. 

%It is widely recognized that these convex relaxation approaches have many desirable features, including polynomial-time solvability and remarkable localization accuracy. Furthermore, their solution is often a good starting point for gradient-descent methods and general non-convex solvers, typically employed to increase the accuracy of the localization~\cite{Weinberger2007}.    

%The leading-edge technique in convex relaxations applied to the sensor network localization problem can be considered the edge-based semidefinite relaxation (ESDP) of \cite{Wang2008}. In that paper, the authors advocate their method using extensive simulation results that outperform the then-available approaches both in computation time and in accuracy. More recently, this method has been further improved to account for robustness to noise~\cite{Pong2011} and lower bound constraints on distances have been added to ameliorate accuracy~\cite{Pong2012}. 

%Despite the richness of the convex relaxation literature, two main aspects have been overlooked. First of all, all the relaxations for the sensor network localization problem adopt cost functions depending on squared distances which, unlike the Maximum Likelihood (ML) formulation, are known to amplify measurement errors as well as outliers. To the authors' knowledge, only the works in

Despite the richness of the convex relaxation literature, two main aspects have been overlooked. First of all, a comprehensive characterization of these convex relaxations based on the maximum likelihood (ML) formulation is missing. In~\cite{Biswas2006a,Lui2009,Yang2009,Oguz-Ekim2010} ML-based relaxations are explored, but only for specific noise models (mainly Gaussian noise), without a proper understanding of how different noise models would affect performance. 

%all the relaxations for the sensor network localization problem adopt cost functions depending on squared distances which, unlike the Maximum Likelihood (ML) formulation, are known to amplify measurement errors as well as outliers. To the authors' knowledge, only the works in

%Biswas, other ML! 
%~\cite{Yang2009, Oguz-Ekim2010} explore ML-relaxations, but in the different setting of source/target localization. 

The second overlooked aspect regards the lack of distributed optimization algorithms to solve convex relaxation problems with \emph{certificates of convergence} to the centralized optimizer, convergence rate, and proven robustness when applied to real sensor networks bounded by asynchronous communication and limited computation capabilities. %In large-scale settings and for sensor nodes endowed with computation and communication capabilities, it makes sense to ask whether the sensor nodes themselves can estimate their location autonomously by communicating (asynchronously) with the other sensor nodes. Obviously, in such a scenario, we cannot expect the sensor nodes to communicate with all the nodes in the network, thus a paramount question is how to design \emph{distributed} algorithms to tackle the localization problem.

\textbf{Contributions.} First, we generalize the current state-of-the-art convex relaxations by formulating the sensor network localization problem in a maximum likelihood framework and then relaxing it. This class of relaxations (which depends on the choice of the probability density function (PDF) of the noise) is represented by the convex program~\eqref{eq.sdp}. We show that this program is a rank relaxation of the original non-convex ML estimation problem, and at least for two widely used cases (Gaussian noise and Gaussian quantized measurements), it is a rank-$D$ relaxation ($D$ being the dimension of the space where the sensor nodes live, Proposition~\ref{prop.sdprelax}). The relaxed convex program is then further massaged into the edge-based ML relaxation~\eqref{eq.esdp} to lessen the computation requirements and to facilitate the distribution of the problem among the nodes. Furthermore, we show numerically that the tightness of the relaxation (in particular, the property of being derived from a rank-$D$ relaxation or not) can affect the performance of the convex program~\eqref{eq.esdp} more than the correctness of the noise model. 

%Furthermore, we will offer a few generalizations of the proposed formulation to handle, for example, errors in the anchor positions, and we will demonstrate that the ML convex relaxation leads naturally to a Moving Horizon Estimator in the case of mobile sensor networks. This last fact equips us with the right tool to solve dynamic localization problems. 

As a second contribution, we demonstrate how the edge-based ML convex relaxation can be handled via the alternating direction method of multipliers (ADMM), which gives us a powerful leverage for the analysis of the resulting algorithm. The proposed algorithm, Algorithm~\ref{alg.ADMM}, is distributed in nature: the sensor nodes are able to locate themselves and the neighboring nodes without the knowledge of the whole network. This algorithm converges with a rate of $O(1/\kit)$ ($\kit$ being the number of iterations) to the solution of~\eqref{eq.esdp} (Theorem~\ref{theo.convergence}). Using Algorithm~\ref{alg.ADMM}, each sensor node has a total communication cost to reach a certain average local accuracy of the solution that is independent of the network size (Proposition~\ref{prop.time} and Corollary~\ref{coro.comm}). The proposed algorithm is then proven to converge even when run asynchronously (Theorem~\ref{theo.asynch}) and when the nodes are affected by computation errors (Theorem~\ref{prop.2}). These features, along with guaranteed convergence, are very important in real-life sensor network applications. Finally, we compare the usage of Algorithm~\ref{alg.ADMM} with some other available possibilities, in particular, the methods suggested in~\cite{Shi2010} and \cite{Simonetto2013a}, both in terms of theoretical performances and simulation results. These analyses support our proposed distributed algorithm, especially for large-scale settings.   

%Following the path of recent advances in convergence proofs based on variational inequalities, we will be able to understand better the role of computation accuracy in our distributed algorithm. This will lead to our third contribution, which is the convergence rate analysis of the distributed algorithm in presence of computation errors.  In particular we will prove that the ADMM is converging as $O(1/\kit)$ to an error floor whose magnitude is proportional to the computation errors. 

\textbf{Organization.} The remainder of the paper is organized as follows. Section~\ref{sec:pf} details the problem formulation. Section~\ref{sec:convex} presents the proposed maximum likelihood convex relaxation~\eqref{eq.sdp} along with some examples. Section~\ref{sec:edge} introduces the edge-based relaxation~\eqref{eq.esdp}, which is the building block for our distributed algorithm. Section~\ref{sec:distr0} surveys briefly distributed techniques to solve the localization problem, while, in Section~\ref{sec:distr}, we focus on the development of our distributed algorithm and its analysis. Numerical simulations and comparisons are displayed in Section~\ref{sec:num}, while our conclusions are drawn in Section~\ref{sec:concl}. 

\section{Preliminaries and Problem Statement}\label{sec:pf}

We consider a network of $n$ static wireless sensor nodes with computation and communication capabilities, living in a $D$-dimensional space (typically $D$ will be the standard 2-dimensional or 3-dimensional Euclidean space). We denote the set of all nodes $\mathcal{V} = \{1, \dots, n\}$. Let $\x_{i} \in \mathbb{R}^D$ be the position vector of the $i$-th sensor node, or equivalently, let $\X = [\x_{1} , \dots , \x_{n}] \in \mathbb{R}^{D\times n}$ be the matrix collecting the position vectors. We consider an environment with line-of-sight conditions between the nodes and we assume that some pairs of sensor nodes $(i,j)$ have access to noisy range measurements as	
\begin{equation}
r_{i,j} = d_{i,j} + \nu_{i,j},
\label{eq.noisymeasurement}
\end{equation}
where $d_{i,j} = ||\x_{i} - \x_{j}||_2$ is the noise-free Euclidean distance and $\nu_{i,j}$ is an additive noise term with known probability distribution. We call $p_{i,j}(d_{i,j}(\x_i,\x_j)|r_{i,j})$ the inter-sensor sensing PDF, where we have indicated explicitly the dependence of $d_{i,j}$ on the sensor node positions $(\x_i,\x_j)$. %We remark that $r_{i,j}$ and $r_{j,i}$ need not to be the same, as well as the noise and the pdf's. 

In addition, we consider that some sensors also have access to noisy range measurements with some fixed anchor nodes (whose position $\a_k$, for $k \in \{1,\dots,m\}$, is known by all the neighboring sensor nodes of each $\a_k$) as   
\begin{equation}
v_{i,k} = e_{i,k} + \mu_{i,k},
\label{eq.noisyanchormeasurement}
\end{equation}
where, $e_{i,k} = ||\x_{i} - \a_{k}||_2$ is the noise-free Euclidean distance and $\mu_{i,k}$ is an additive noise term with known probability distribution. We denote as $p_{i,k,\mathrm{a}}(e_{i,k}(\x_i,\a_k)|v_{i,k})$ the anchor-sensor sensing PDF.

We use graph theory terminology to characterize the set of sensor nodes $\mathcal{V}$ and the measurements $r_{i,j}$ and $v_{i,k}$. In particular, we say that the measurements $r_{i,j}$ induce a graph with $\mathcal{V}$ as vertex set, i.e., for each sensor node pair $(i,j)$ for which there exists a measurement $r_{i,j}$, there exists an edge connecting $i$ and $j$. The set of all edges is $\mathcal{E}$ and its cardinality is $E$. We denote this undirected graph as $\mathcal{G} = (\mathcal{V}, \mathcal{E})$. The neighbors of sensor node $i$ are the sensor nodes that are connected to $i$ with an edge. The set of these neighboring nodes is indicated with $\mathcal{N}_i$, that is $\mathcal{N}_i = \{j| (i, j) \in \mathcal{E}\}$. Since the sensor nodes are assumed to have communication capabilities, we implicitly assume that each sensor node $i$ can communicate with all the sensors in $\mathcal{N}_i$, and with these only. In a similar fashion, we collect the anchors in the vertex set $\mathcal{V}_\mathrm{a} = \{1, \dots, m\}$ and we say that the measurements $v_{i,k}$ induce an edge set $\mathcal{E}_\mathrm{a}$, composed by the pairs $(i,k)$ for which there exists a measurement $v_{i,k}$. Also, we denote with $\mathcal{N}_{i, \mathrm{a}}$ the neighboring anchors for sensor node $i$, i.e., $\mathcal{N}_{i, \mathrm{a}} = \{k| (i, k) \in \mathcal{E}_\mathrm{a}\}$.

\textbf{Problem Statement.} The sensor network localization problem is formulated as estimating the position matrix $\X$ (in some cases, up to an orthogonal transformation) given the measurements $r_{i,j}$ and $v_{i,k}$ for all $(i,j)\in \mathcal{E}$ and $(i,k) \in \mathcal{E}_{\mathrm{a}}$, and the anchor positions $\a_k$, $k \in \mathcal{V}_{\mathrm{a}}$. When $\mathcal{V}_{\mathrm{a}} = \emptyset$ we call the problem anchor-free localization. The sensor network localization problem can be written in terms of maximizing the likelihood leading to the following optimization problem
\begin{multline}
\hskip-0.3cm\X_{\textrm{ML}}^* = \arg\hskip-0.1cm\max_{\hskip-0.4cm \X\in\mathbb{R}^{D\times n}} \left\{\sum_{(i,j) \in\mathcal{E}} \ln p_{i,j}(d_{i,j}(\x_i,\x_j)|r_{i,j})\right.\\\left. +  \sum_{(i,k)\in\mathcal{E}_{\mathrm{a}}} \ln p_{i,k,\mathrm{a}}(e_{i,k}(\x_i,\a_k)|v_{i,k})\right\}.
\label{eq.mle}
\end{multline}
%in the sense that the localization problem is equivalent to finding a Maximum Likelihood global minimizer of~\eqref{eq.mle}, here indicated with $\X_{\textrm{ML}}^*$. 
This optimization problem is in general non-convex and it is also NP-hard to find any global solution. %\footnote{In principle there are not even any guarantees that the global minimizer is unique. Uniqueness of the solution of problem~\eqref{eq.mle} is linked to the uniqueness of the graph realization of $\mathcal{G}' = (\mathcal{V} \cup \mathcal{V}_{\mathrm{a}}, \mathcal{E} \cup \mathcal{E}_{\mathrm{a}})$. See~\cite{Hendrickson1992} for details.}. 
In this paper, under the sole assumptions that:
\begin{assumption}\label{ass.sens}
The sensing PDFs $p_{i,j}(d_{i,j}(\x_i,\x_j)|r_{i,j})$ and $p_{i,k,\mathrm{a}}(e_{i,k}(\x_i,\a_k)|v_{i,k})$ are log-concave functions of the unknown distances $d_{i,j}$ and $e_{i,k}$,
\end{assumption}
\begin{assumption}\label{ass.conn}
The graph induced by the inter-sensor range measurements $\mathcal{G}$ is connected,
\end{assumption}
\vskip0.1cm
we will propose a convex relaxation to transform the ML estimator~\eqref{eq.mle} into a more tractable problem, which we will then solve using ADMM in a distributed setting, where each of the sensor nodes, by communicating only with the neighboring nodes, will determine its own position. 

\section{Convex Relaxations}\label{sec:convex}

\subsection{Maximum Likelihood Relaxation}

%In this section, we focus on deriving a convex relaxation to directly handle the original ML formulation of the sensor network localization problem~\eqref{eq.pop}. We also demonstrate that this relaxation is a rank-$D$ relaxation, and therefore equivalent in kind to most of the other SDP relaxations in the literature.

To derive the mentioned convex relaxation of the ML estimator~\eqref{eq.mle}, several steps are needed. First of all, we introduce the new variables $\Y = \X^\transp \X$, $\delta_{i,j} = d_{i,j}^2$, $\epsilon_{i,k} = e_{i,k}^2$, and we collect the $d_{i,j}$, $e_{i,k}$, $\delta_{i,j}$, $\epsilon_{i,k}$ scalar variables into the stacked vectors $\d, \e, \mathbold{\delta}, \mathbold{\epsilon}$. Second, we rewrite the cost function of the ML estimator as dependent only on the pair $(\d, \e)$ as 
\begin{multline}
f(\d, \e) := - \Big(\sum_{(i,j)\in\mathcal{E}}\ln p_{i,j}(d_{i,j}|r_{i,j}) \\  + \sum_{(i,k)\in\mathcal{E}_\mathrm{a}}  \ln p_{i,k,\mathrm{a}}(e_{i,k}|v_{i,k}) \Big).
\label{eq.mlede}
\end{multline}
Third, we re-introduce the dependencies of $(\d,\e)$ on $\X$ and on $(\mathbold{\delta}, \mathbold{\epsilon})$ by considering the following \emph{constrained} optimization 
\begin{subequations}\label{eq.mle1}
\begin{align}
\minimize_{\X, \Y, \mathbold{\delta}, \mathbold{\epsilon}, \d, \e} &&& f(\d,\e) & \label{costfuncde}\\
\textrm{subject to} &&& \hskip-0.25cm\left.\begin{array}{l}Y_{ii} + Y_{jj} - 2 Y_{ij} = \delta_{i,j}, \\ \delta_{i,j} = d_{i,j}^2, \,d_{i,j}\geq 0, \hskip0cm\textrm{for all } (i,j)\in\mathcal{E}\end{array} \right\}\label{eq.cons.lmi01}\\
&&& \hskip-0.25cm\left.\begin{array}{l} Y_{ii} - 2 \x_i^\transp \a_{k} + ||\a_k||^2_2 = \epsilon_{i,k}, \\ \epsilon_{i,k} = e_{i,k}^2,\, e_{i,k}\geq0, \hskip0cm \textrm{for all } (i,k)\in\mathcal{E}_{\mathrm{a}}\end{array} \right\}\label{eq.cons.lmi02}\\
&&& \Y = \X^\transp \X.  \label{eq.cons.lmi0X}
\end{align}
\end{subequations}
The problem~\eqref{eq.mle1} is \emph{equivalent} to~\eqref{eq.mle}: the constraints in the problem~\eqref{eq.mle1} have both the scope of imposing the pair-wise distance relations and of enforcing the chosen change of variables (in fact, without the constraints, all the variables would be independent of each other). In the new variables and under Assumption~\ref{ass.sens}, $f(\d,\e)$ is a convex function, however the constraints of~\eqref{eq.mle1} still define a non-convex set. Nonetheless, we can massage the constraints by using Schur complements and propose the following convex relaxation
\begin{subequations}\label{eq.sdp}
\begin{align}
\minimize_{\X, \Y, \mathbold{\delta}, \mathbold{\epsilon}, \d, \e} &&& f(\d,\e) & \label{eq.costfunction}\\
\textrm{subject to} &&& \hskip-0.4cm\left.\begin{array}{l} Y_{ii} + Y_{jj} - 2 Y_{ij} = \delta_{i,j}, \delta_{i,j} \geq 0,\\
\left(\begin{array}{cc} 1 & d_{i,j} \\ d_{i,j} & \delta_{i,j} \end{array}\right) \succeq 0,  d_{i,j}\geq 0, \textrm{for all }  (i,j)\in\mathcal{E}\end{array} \hskip-0.125cm\right\} \label{eq.cons.lmi1}\\
&&& \hskip-0.4cm\left.\begin{array}{l} Y_{ii} - 2 \x_i^\transp \a_{k} + ||\a_k||^2_2 = \epsilon_{i,k}, \epsilon_{i,k} \geq 0, \\  \left(\begin{array}{cc} 1 & e_{i,k} \\ e_{i,k} & \epsilon_{i,k} \end{array}\right) \succeq 0, e_{i,k}\geq 0, \textrm{for all }  (i,k)\in\mathcal{E}_{\mathrm{a}}\end{array}\hskip-0.3cm \right\}  \label{eq.cons.lmi2}\\
&&&\hskip-0.2cm \left(\begin{array}{cc} \mathbf{I}_D & \X \\ \X^\transp & \Y \end{array}\right) \succeq 0, \Y \succeq 0. \label{eq.lmiX}
\end{align}
\end{subequations}
The problem~\eqref{eq.sdp} is now convex (specifically, it is a convex optimization problem with generalized inequality constraints~\cite{Boyd2004a}) and its optimal solution represents a lower bound for the original non-convex ML estimator~\eqref{eq.mle}. 

In the problem~\eqref{eq.sdp}, all the three constraints~\eqref{eq.cons.lmi1} till \eqref{eq.lmiX} are rank relaxed versions of~\eqref{eq.cons.lmi01} till \eqref{eq.cons.lmi0X}, which makes problem~\eqref{eq.sdp} a rank relaxation. Usually, convex relaxations for sensor network localization are formulated directly on the squared distance variables $(\mathbold{\delta}, \mathbold{\epsilon})$ using a cost function $f_\mathrm{sq}(\mathbold{\delta}, \mathbold{\epsilon})$ (not ML) and eliminating the variables $(\d,\e)$. This way of formulating the problem does not capture the noise distribution, but renders the resulting relaxation a rank-$D$ relaxation, since~\eqref{eq.lmiX} is the only relaxed constraint~\cite{Biswas2006a}. Problem~\eqref{eq.sdp} both models correctly the noise distribution, being derived from an ML formulation, and for some common used noise PDFs can be transformed into a rank-$D$ relaxation, in which case it is equivalent in tightness to relaxations based on squared distance alone.  

In the next subsections, we specify the convex relaxation~\eqref{eq.sdp} for different noise distributions (satisfying Assumption~\ref{ass.sens}) and prove that~\eqref{eq.sdp} can be expressed as a rank-$D$ relaxation for two particular yet widely used cases.  In Section~\ref{sec:num}, while presenting simulation results, we discuss how this aspect can affect the quality of the position estimation. In particular, it appears that tighter relaxations may have a lower estimation error, even when they employ less accurate noise models. %There we also show a possible way to modify the cost function~\eqref{eq.costfunction} to tighten the relaxation and decrease the estimation error. 

\subsection{Example 1-- Gaussian Noise Relaxation}

%%BOTTOM PAGE EQUATION--------------------------------------------------------------------------------
\begin{figure*}[bp]\footnotesize
\hrule
\begin{multline*}
%\begin{multline}
f_{\mathrm{Q, GN}}(\mathbold{\delta}, \mathbold{\epsilon},\d,\e):= 
 -\left(\sum_{(i,j)\in\mathcal{E}} \ln \left( \sum_{s=1}^S q_{r,i,j,s} \int_{r'_{i,j}\in\mathcal{Q}_s} \hskip-.75cm\exp\left[-\sigma^{-2}_{i,j}/2(\delta_{i,j} -2 d_{i,j}r_{i,j}'+r_{i,j}'^2)\right]\textrm{d}r'_{i,j}\right) +\right.\\ \left. \sum_{(i,k)\in\mathcal{E}_{\mathrm{a}}} \hskip-0.2cm\ln \left(\sum_{s=1}^S q_{v,i,k,s} \int_{v'_{i,k}\in\mathcal{Q}_s} \hskip-0.75cm\exp\left[-\sigma^{-2}_{i,k,\mathrm{a}}/2(\epsilon_{i,k} -2 e_{i,k}v_{i,k}'+v_{i,k}'^2)\right]\textrm{d}v'_{i,k}\right)\right)%\end{multline}
\end{multline*}
\end{figure*}
%%------------------------------------------------------------------------------------------------

In the case of Gaussian noise, we assume that the noises $\nu_{i,j}$ and $\mu_{i,k}$ in the sensing equations~\eqref{eq.noisymeasurement} and \eqref{eq.noisyanchormeasurement} are drawn from a white zero-mean PDF, i.e., $\nu_{i,j}\sim \mathcal{N}(0, \sigma_{i,j}^2)$ and $\mu_{i,k}\sim \mathcal{N}(0, \sigma_{i,k,\mathrm{a}}^2)$. The cost function $f(\d,\e)$ then is
\begin{multline*}
f_{\mathrm{GN},0}(\d,\e):=\sum_{(i,j)\in\mathcal{E}} \sigma_{i,j}^{-2}(d_{i,j}^2-2 d_{i,j} r_{i,j} + r_{i,j}^2) + \\ \sum_{(i,k)\in\mathcal{E}_{\mathrm{a}}} \sigma_{i,k,\mathrm{a}}^{-2}(e_{i,k}^2-2 e_{i,k} v_{i,k} + v_{i,k}^2).
\end{multline*}
A natural way to rewrite this cost is to enforce the change of variables $\delta_{i,j}=d_{i,j}^2$ and $\epsilon_{i,k} = e_{i,k}^2$, yielding 
\begin{multline*}
f_{\mathrm{GN}}(\mathbold{\delta},\mathbold{\epsilon},\d,\e):=\sum_{(i,j)\in\mathcal{E}} \sigma_{i,j}^{-2}(\delta_{i,j}-2 d_{i,j} r_{i,j} + r_{i,j}^2) + \\ \sum_{(i,k)\in\mathcal{E}_{\mathrm{a}}} \sigma_{i,k,\mathrm{a}}^{-2}(\epsilon_{i,k}-2 e_{i,k} v_{i,k} + v_{i,j}^2).
\end{multline*}
With the cost $f_{\mathrm{GN}}(\mathbold{\delta},\mathbold{\epsilon},\d,\e)$, the optimization problem reads\footnote{A similar formulation for this relaxation can be found in~\cite{Biswas2006a}. We note that problem~\eqref{eq.sdpGN} is not equivalent to~\eqref{eq.sdp} with cost function $f_{\mathrm{GN,0}}(\d,\e)$, since for~\eqref{eq.sdp}, $\delta_{i,j}\geq d_{i,j}^2$ and $\epsilon_{i,k}\geq e_{i,k}^2$. }
\begin{subequations}\label{eq.sdpGN}
\begin{align}
\minimize_{\X, \Y, \mathbold{\delta}, \mathbold{\epsilon}, \d, \e} &&& f_{\mathrm{GN}}(\mathbold{\delta}, \mathbold{\epsilon},\d,\e) & \label{eq.costfunctionGN}\\
\textrm{subject to} &&& \eqref{eq.cons.lmi1}, \eqref{eq.cons.lmi2}, \eqref{eq.lmiX}
\end{align}
\end{subequations}
This relaxation is not only convex but also a semidefinite program (SDP), i.e., it has a linear cost function and generalized linear constraints~\cite{Boyd2004a}. Some of its constraints are linear matrix inequalities (LMIs). For the semidefinite program~\eqref{eq.sdpGN}, the following proposition holds true. 
\begin{proposition}\label{prop.sdprelax}
Under the assumption of Gaussian noise, the semidefinite program~\eqref{eq.sdpGN} is a rank-$D$ relaxation of the original non-convex optimization problem~\eqref{eq.mle}.
\end{proposition}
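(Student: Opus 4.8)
The plan is to exhibit the canonical rank-$D$ relaxation of~\eqref{eq.mle} explicitly and then to show that the semidefinite program~\eqref{eq.sdpGN} coincides with it. Recall that~\eqref{eq.mle1} is an exact reformulation of~\eqref{eq.mle}, and that the constraint $\Y=\X^\transp\X$ is equivalent to the linear matrix inequality~\eqref{eq.lmiX} \emph{together with} the rank condition $\mathrm{rank}\begin{pmatrix}\I_D & \X\\ \X^\transp & \Y\end{pmatrix}=D$: by a Schur-complement argument the matrix in~\eqref{eq.lmiX} is positive semidefinite iff $\Y\succeq\X^\transp\X$, and it has rank exactly $D$ iff $\Y=\X^\transp\X$. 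Hence the \emph{rank-$D$ relaxation} of~\eqref{eq.mle} is, by definition, the program~(R) obtained from~\eqref{eq.mle1} by keeping the squared-distance equalities $\delta_{i,j}=d_{i,j}^2$, $\epsilon_{i,k}=e_{i,k}^2$ intact and replacing only $\Y=\X^\transp\X$ by~\eqref{eq.lmiX}. It then suffices to prove that~(R), carrying the Gaussian cost $f_{\mathrm{GN},0}$, and~\eqref{eq.sdpGN} have the same optimal value and the same minimizers, up to the value of the auxiliary variables $\d,\e$.

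First I would record the trivial inclusion of feasible sets: any point feasible for~(R) satisfies $\delta_{i,j}=d_{i,j}^2\geq d_{i,j}^2$ and $\epsilon_{i,k}=e_{i,k}^2\geq e_{i,k}^2$, so the $2\times2$ inequalities in~\eqref{eq.cons.lmi1}--\eqref{eq.cons.lmi2} hold and the feasible set of~(R) is contained in that of~\eqref{eq.sdpGN}; moreover on this set $f_{\mathrm{GN},0}(\d,\e)=f_{\mathrm{GN}}(\mathbold{\delta},\mathbold{\epsilon},\d,\e)$ precisely because $\delta_{i,j}=d_{i,j}^2$ (cf.\ the footnote to~\eqref{eq.sdpGN}). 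Minimizing the same cost over a larger set can only decrease the value, so the optimal value of~\eqref{eq.sdpGN} is at most that of~(R).

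The crux is the reverse inequality, which I would obtain by partially minimizing~\eqref{eq.sdpGN} over $(\d,\e)$ with $(\X,\Y,\mathbold{\delta},\mathbold{\epsilon})$ held fixed. Each edge contributes the term $\sigma_{i,j}^{-2}(\delta_{i,j}-2 r_{i,j} d_{i,j}+r_{i,j}^2)$, which is affine and nonincreasing in $d_{i,j}$ (using $\sigma_{i,j}^{-2}>0$ and the physical nonnegativity $r_{i,j}\geq0$ of the range measurements), while $\delta_{i,j}$ is pinned to $Y_{ii}+Y_{jj}-2Y_{ij}\geq 0$ by the linear equality and is independent of $d_{i,j}$. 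Feasibility confines $d_{i,j}$ to $0\leq d_{i,j}\leq\sqrt{\delta_{i,j}}$, so a minimizer is attained at $d_{i,j}=\sqrt{\delta_{i,j}}$, i.e.\ the $2\times2$ inequality is active and $\delta_{i,j}=d_{i,j}^2$. The identical argument applied to $e_{i,k}$ (with $v_{i,k}\geq0$) yields $\epsilon_{i,k}=e_{i,k}^2$. Thus a minimizer of~\eqref{eq.sdpGN} may be chosen to satisfy the squared-distance equalities, which makes it feasible for~(R) at equal cost, so the optimal value of~(R) is at most that of~\eqref{eq.sdpGN}.

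Combining the two inequalities shows that~\eqref{eq.sdpGN} is equivalent to~(R); eliminating the now-redundant variables $\d,\e$ via $d_{i,j}=\sqrt{\delta_{i,j}}$, $e_{i,k}=\sqrt{\epsilon_{i,k}}$ leaves a program whose only relaxed constraint, relative to the exact reformulation~\eqref{eq.mle1}, is the rank-$D$ condition absorbed into~\eqref{eq.lmiX} — exactly the assertion. The step I expect to be the main obstacle is the activation argument, since it rests on the monotonicity of the Gaussian cost in the linear distance variables, which forces $d_{i,j}$ up against its upper bound $\sqrt{\delta_{i,j}}$ rather than down; this in turn requires the nonnegativity of the measurements and is precisely the structural feature of the Gaussian (and Gaussian-quantized) model that a general log-concave PDF satisfying Assumption~\ref{ass.sens} need not possess. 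This is why~\eqref{eq.sdp} is only a rank relaxation in general but a rank-$D$ relaxation in the present case.
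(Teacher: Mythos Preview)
Your proof is correct and follows essentially the same approach as the paper: the core step in both is that the Gaussian cost $f_{\mathrm{GN}}$ is nonincreasing in each $d_{i,j}$ (and $e_{i,k}$), so at optimality the $2\times2$ LMIs are active, i.e.\ $\delta_{i,j}^*=d_{i,j}^{2*}$ and $\epsilon_{i,k}^*=e_{i,k}^{2*}$, leaving~\eqref{eq.lmiX} as the sole relaxed constraint. Your version is somewhat more explicit---you spell out what the ``rank-$D$ relaxation''~(R) is, frame the argument as a two-sided equivalence of programs, and make the hidden hypothesis $r_{i,j},v_{i,k}\geq 0$ visible---but the mechanism is identical to the paper's.
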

\begin{proof}
%The non-convex optimization problem~\eqref{eq.mle} is equivalent to the constrained optimization problem~\eqref{eq.mle1}. Under the Gaussian noise assumption, the cost function~\eqref{costfuncde} can be equivalently written using $f_{\mathrm{GN}}(\mathbold{\delta}, \mathbold{\epsilon},\d,\e)$. We now need to show that the semidefinite program~\eqref{eq.sdpGN} is a rank-$D$ relaxation of $\eqref{eq.mle1}$, in particular 
We need to show that at optimality the relaxed constraints~\eqref{eq.cons.lmi1} and \eqref{eq.cons.lmi2} are equivalent to the original constraints~\eqref{eq.cons.lmi01} and \eqref{eq.cons.lmi02}. In other words, we need to show that any optimal solution of the semidefinite program~\eqref{eq.sdpGN}, say $\mathbold{\delta}^*, \mathbold{\epsilon}^*, \d^*, \e^*, \Y^*, \X^*$, satisfies the following
$$
\delta_{i,j}^* = d_{i,j}^{2*}, \textrm{ and } \epsilon_{i,k}^* = e_{i,k}^{2*}
$$ 
for all $(i,j) \in \mathcal{E}$ and for all $(i,k) \in \mathcal{E}_{\mathrm{a}}$. To see this, note that the LMIs in the constraints~\eqref{eq.cons.lmi1} and \eqref{eq.cons.lmi2} can be rewritten as
\begin{equation}\label{eq.sole}
d_{i,j}^2 \leq \delta_{i,j},\textrm{and } e_{i,k}^2 \leq \epsilon_{i,k}.
\end{equation}
The cost function~\eqref{eq.costfunctionGN} maximizes the scalar variables $d_{i,j}$ and $e_{i,k}$, which are constrained only by~\eqref{eq.sole}. Therefore at optimality, we will always have $d_{i,j}^{2*} = \delta_{i,j}^*$ and $e_{i,k}^{2*} = \epsilon_{i,k}^*$, and thus the claim holds.
\end{proof}

%Proposition~\ref{prop.sdprelax} implies that the SDP relaxation~\eqref{eq.sdp} is not ``weaker'' that traditional SDP relaxations based on different estimation criteria, e.g., squared distance cost function of~\cite{???, Weinberger2006}. 

%Moreover, since the semidefinite program~\eqref{eq.sdpGN} is a rank-$D$ relaxation, the solution of~\eqref{eq.sdpGN} will be equivalent to a global optimizer of~\eqref{eq.mle} if the pair-wise measurements are exact and the underlying graph is uniquely localizable~\cite[Theorem 2]{So2007}. 

%In particular, as rank-$D$ relaxation, the solution of~\eqref{eq.sdp} will be equivalent to a global optimizer of~\eqref{eq.pop} if the pair-wise measurements are exact and the underlying graph is uniquely localizable~\cite[Theorem ??]{So2007}. Furthermore, the quality of the relaxed solution, i.e., ... 

%be precise. 

% can be analyzed employing the growing literature in rank-$D$ relaxations\footnote{In particular, the distance between the global optimum and the relaxed one is often referred to as the Grothendieck inequality, whose value can be computed analytically in a number of interesting cases, see for example~\cite{Briet2012}.}. In some cases, it has been reported that the root mean square error of these $D$-rank relaxed solutions reach the Cram\'er Rao bound~\cite{Yang2009}.

\subsection{Example 2-- Quantized Observation Relaxation}

An interesting, and realistic, elaboration of the ML estimator is when, due to limited sensing capabilities, the sensors produce a \emph{quantized} version of $r_{i,j}$ and $v_{i,k}$ (see the discussion in ~\cite{Schizas2008,Jakubiec2013} for its relevance in sensor networks). Consider an $S$-element convex tessellation of $\mathbb{R}_{+}$, comprised of the convex sets $\{\mathcal{Q}_{s}\}_{s=1}^{S}$. A quantization of $r_{i,j}$ and $v_{i,k}$ produces the observations $q_{r,i,j,s}$ and $q_{v,i,k,s}$, which are unitary if $r_{i,j}\in \mathcal{Q}_{s}$ and $v_{i,k}\in\mathcal{Q}_{s}$, respectively. Otherwise $q_{r,i,j,s}$ and $q_{v,i,k,s}$ are zero. The resulting cost function for the convex relaxation~\eqref{eq.sdp} is now $f_{\mathrm{Q}}(\d,\e):=$ \vskip-0.5cm
\begin{multline*}
-\left(\sum_{(i,j)\in\mathcal{E}} \ln \left(\sum_{s=1}^S q_{r,i,j,s}\int_{r'_{i,j}\in\mathcal{Q}_s} p_{i,j}(d_{i,j}|r'_{i,j})\textrm{d}r'_{i,j} \right) + \right.\\ 
\left.\sum_{(i,k)\in\mathcal{E}_{\mathrm{a}}} \ln \left(\sum_{s=1}^S q_{v,i,k,s}\int_{v'_{i,k}\in\mathcal{Q}_s} p_{i,k,\mathrm{a}}(e_{i,k}|v'_{i,k})\textrm{d}v'_{i,k} \right)\right),
\end{multline*}
which is convex, since the integral of a log-concave function over a convex set is also log-concave. The resulting convex relaxation reads
\begin{subequations}\label{eq.sdpQ}
\begin{align}
\minimize_{\X, \Y, \mathbold{\delta}, \mathbold{\epsilon}, \d, \e} &&& f_{\mathrm{Q}}(\d,\e) & \label{eq.costfunctionQ}\\
\textrm{subject to} &&& \eqref{eq.cons.lmi1}, \eqref{eq.cons.lmi2}, \eqref{eq.lmiX}
\end{align}
\end{subequations}
which is a rank relaxation of~\eqref{eq.mle}, but in general not a rank-$D$ relaxation. We can specify~\eqref{eq.costfunctionQ} for Gaussian noise (using the same variable enforcing of $f_{\mathrm{GN}}$) as done in the equation at the bottom of the page. %, where $c_{i,j}=-\sigma_{i,j}^{-2}/2$ and $c_{i,k}=- \sigma_{i,k}^{-2}/2$. 
It is not difficult to show that the convex relaxation~\eqref{eq.sdpQ} equipped with the cost $f_{\mathrm{Q, GN}}(\mathbold{\delta}, \mathbold{\epsilon},\d,\e)$ is now a rank-$D$ relaxation, by using similar arguments as in Proposition~\ref{prop.sdprelax}. 

%Interestingly, when $S=2$, the convex relaxation~\eqref{} offers a elegant solution to all the scenarios in which we do not have range measurements but we only know (with some uncertainty) if two nodes are connected or not (e.g., range-free localization~\cite{,}). In particular, it is sufficient to say
%$$
%q_{r,i,j,1} = 1, if ... d_{i,j}\leq\theta_{i,j}
%$$ 

\subsection{Example 3-- Laplacian Noise Relaxation}

Laplacian noise is used for example to model outliers in range measurements~\cite{Ouguz-Ekim2011} and to model errors coming from signal interference, e.g., in UWB localization systems~\cite{Wymeersch2012}. In the Laplacian noise case the cost function can be specified as 
\begin{multline*}
f_{\mathrm{L}}(\d,\e):=\sum_{(i,j)\in\mathcal{E}} \frac{|d_{i,j} - r_{i,j}|}{\sigma_{i,j}} + \sum_{(i,k)\in\mathcal{E}_{\mathrm{a}}} \frac{|e_{i,k} - v_{i,k}|}{\sigma_{i,k,\mathrm{a}}},
\end{multline*}
and the ML convex relaxation reads
\begin{subequations}\label{eq.esdpunif}
\begin{align}
\minimize_{\X, \Y, \mathbold{\delta}, \mathbold{\epsilon}, \d, \e} &&& f_{\mathrm{L}}(\d,\e) \\
\textrm{subject to} &&& \eqref{eq.cons.lmi1}, \eqref{eq.cons.lmi2}, \eqref{eq.lmiX}. 
\end{align}
\end{subequations}

This ML convex relaxation is neither a rank-$D$ relaxation, nor it can be transformed into one by some variable enforcing in the cost function, yet it correctly models Laplacian PDFs.

\subsection{Example 4- Uniform Noise Relaxation}

Uniform noise distributions are used when the source of error is not known a priori and only a bound on the noise level is available. For example, this is the case when we are aware of a lower bound on the pair-wise distances and of an upper bound dictated by connectivity \cite{Biswas2008, Sheu2010}. Considering uniform noise PDFs in the range $d_{i,j}\pm \sigma_{i,j}$ and $e_{i,k}\pm\sigma_{i,k,\mathrm{a}}$,    
%$$
%p_{i,j}(d_{i,j}|r_{i,j}) = \left\{\begin{array}{cl}\frac{1}{2\sigma_{i,j}} & \textrm{if } r_{i,j}\in[d_{i,j}-\sigma_{i,j},d_{i,j}+\sigma_{i,j}]\\ 0 & \textrm{otherwise}\end{array}\right., 
%$$
%$$
%p_{i,k}(e_{i,k}|v_{i,k}) = \left\{\begin{array}{cl}\frac{1}{2\sigma_{i,k}} & \textrm{if } v_{i,k}\in[e_{i,k}-\sigma_{i,k},e_{i,k}+\sigma_{i,k}]\\ 0 & \textrm{otherwise}\end{array}\right..
%$$
%Substituting these PDFs into the definition of $f(\d,\e)$, 
the convex relaxation~\eqref{eq.sdp} becomes the following feasibility problem
\begin{subequations}\label{eq.esdpunif}
\begin{align}
\mathrm{find}&&&{\X, \Y, \mathbold{\delta}, \mathbold{\epsilon}, \d, \e} \\
\textrm{such that} &&& \eqref{eq.cons.lmi1}, \eqref{eq.cons.lmi2}, \eqref{eq.lmiX}  \\
&&& \hskip-2cm r_{i,j}-\sigma_{i,j}\leq d_{i,j}\leq r_{i,j}+\sigma_{i,j} \textrm{ for all} (i,j)\in\mathcal{E}\\
&&& \hskip-2cm v_{i,k}-\sigma_{i,k,\mathrm{a}}\leq e_{i,k}\leq v_{i,k}+\sigma_{i,k,\mathrm{a}} \textrm{ for all} (i,k)\in\mathcal{E}_{\mathrm{a}}.
\end{align}
\end{subequations}

Also in this case, the ML convex relaxation is neither a rank-$D$ relaxation, nor it can be transformed into one by some variable enforcing in the cost function, yet it correctly models uniform noise distributions.

\section{Edge-based Convex Relaxations}\label{sec:edge}

The convex relaxations derived from~\eqref{eq.sdp} couple arbitrarily far away sensor nodes through the LMI constraint~\eqref{eq.lmiX}. This complicates the design of a distributed optimization algorithm. In addition, due to ~\eqref{eq.lmiX}, the complexity of solving the semidefinite program~\eqref{eq.sdp} scales at least as $O(n^3)$, i.e., is at least cubic in the number of sensor nodes \cite{Boyd2004a}, and it could become unfeasible for large-scale networks. In order to massage this coupling constraint, we introduce a further relaxation for~\eqref{eq.sdp}, which will be called edge-based ML (E-ML) relaxation. We consider the following relaxation of~\eqref{eq.sdp} 

\begin{subequations}\label{eq.esdp}
\begin{align}
\minimize_{\X, \Y, \mathbold{\delta}, \mathbold{\epsilon}, \d, \e} &&& f(\d,\e)  \label{eq.costfunctionESDP}\\
\textrm{subject to} &&& \eqref{eq.cons.lmi1}, \eqref{eq.cons.lmi2} \label{eq.cons.10b}\\
&&& \hskip-0.2cm\left(\begin{array}{c|cc} \mathbf{I}_D & \x_i & \x_j  \\ \hline  \x_i^\transp & Y_{ii} &Y_{ij} \\ 
\x_j^\transp & Y_{ij}& Y_{jj} \end{array}\right) \succeq 0,\textrm{for all } (i,j) \in \mathcal{E}. \label{c.lmi3}
\end{align}
\end{subequations}

This relaxation employs the same idea of the edge-based semidefinite program (ESDP) relaxation of~\cite{Wang2008, Lui2009} of considering the coupling constraint~\eqref{eq.lmiX} to be valid on the edges only. Since the constraint~\eqref{eq.lmiX} implies \eqref{c.lmi3} but not the contrary, the relaxation~\eqref{eq.esdp} is not a rank-$D$ relaxation. However, it is straightforward to see that, if the original convex relaxation~\eqref{eq.sdp} was a rank-$D$ relaxation, then for the derived \eqref{eq.esdp}, it would be true that $\delta_{i,j}^* = d_{i,j}^{2*}, \epsilon_{i,k}^* = e_{i,k}^{2*}$. For example, this is the case for Gaussian noise, and we show how this can play an important role for the accuracy in Section~\ref{sec:num}. 

%Furthermore, in the case of Gaussian noise, from Proposition~\ref{prop.sdprelax}, we can further infer that the E-ML (Gaussian) relaxation owns the same theoretical properties of the original ESDP formulation of~\cite{Wang2008}. In particular, although the E-ML relaxation is not a rank-$D$ relaxation, it is tighter than a SOCP relaxation and it is not affected by the convex-hull problem.%\footnote{Ergo, the solution is not constrained to lie in the convex hull of the anchors, see~\cite{Tseng2007} for further details of this well-known issue.}. 

The convex relaxation~\eqref{eq.esdp} is now ready to be distributed among the sensor nodes.

%Rephrase:
%Relaxation~\eqref{eq.esdp} is therefore not only having similar properties as the leading-edge ESDP formulation of~\cite{Wang2008}, but thanks to the ML formulation it is better suited to handle noisy distance measurements. Furthermore, Relaxation~\eqref{eq.esdp} has a lower computational complexity than~\eqref{eq.sdp} (see \cite{Wang2008}, .. ) and it is now ready to be distributed among the sensor nodes. 

\section{Distributed Algorithms for Sensor Network Localization}\label{sec:distr0}

Different distributed methods for sensor network localization have been proposed in recent years. A first group consists of heuristic algorithms, which are typically based on the paradigm of dividing the nodes into arbitrarily selected clusters, solving the localization problem within every cluster and then patching together the different solutions. Methods that belong to this group are  \cite{Chan2009, Khan2010,Cucuringu2012}, while heuristic approaches to SDP relaxations are discussed in~\cite{Biswas2008}. Among the disadvantages of the heuristic approaches is that we introduce arbitrariness into the problem and we typically lose all the guarantees of performance of the ``father'' centralized approach. Furthermore, very often these heuristic methods are \emph{ad-hoc} and problem-dependent, which makes their theoretical characterization difficult (in contrast with the usage of well-established decomposition methods~\cite{Bertsekas1997}). 

The second group of methods employs decomposition techniques to guarantee that the distributed scheme converges to the centralized formulation asymptotically. In this group, under the Gaussian noise assumption, we can find methods that tackle directly the non-convex optimization problem~\eqref{eq.mle} with parallel gradient-descent iterative schemes~\cite{Costa2005,Calafiore2012} or (very recently) a work that uses a minimization-majorization technique to massage~\eqref{eq.mle} sequentially and then employs the alternating direction method of multipliers (ADMM) to distribute the computations among the sensor nodes~\cite{Soares2012}. These approaches have certificates of convergence to a local minimum of the original non-convex problem\footnote{This may not be sufficient for a reasonable localization; thus the need for a good starting condition which can be provided by convex relaxations, see~\cite{Weinberger2007} for some interesting numerical examples.}. Other methods encompass algorithms that tackle multi-dimensional scaling with a communication-intensive distributed spectral decomposition~\cite{Montanari2010}, and algorithms that tackle instead the convex SOCP/SDP relaxations~\cite{Srirangarajan2008,Shi2010,Simonetto2013a}. In particular~\cite{Srirangarajan2008} proposes a parallel distributed version of an SOCP relaxation (similar to the ESDP in~\cite{Wang2008}), whose convergence properties are however not analyzed\footnote{As a matter of fact, the proposed Jacobi-like algorithm is very hard to be proven converging to the centralized solution, since the constraints are coupled and not Cartesian, see~\cite{Bertsekas1997} for a detailed discussion.}. In~\cite{Shi2010}, the authors propose a further improvement of~\cite{Srirangarajan2008} based on the Gauss-Seidel algorithm, which is sequential in nature (meaning that sensors have to wait for each other before running their own local algorithm) and offers convergence guarantees to the ESDP of~\cite{Wang2008}. However, due to the sequential nature, the convergence \emph{rate} depends on the number of sensor nodes, which makes the approach impractical for large-scale networks. Finally, in \cite{Simonetto2013a} duality is exploited to design inexact primal-dual iterative algorithms based on the convex relaxation of~\cite{Weinberger2006, Sun2006, Weinberger2007}. This last approach has the advantage to be parallel and not sequential, nonetheless it is based on consensus algorithms whose convergence rate is also dependent on the size of the network, thus less practical for a large number of sensor nodes.  

In the next section, we propose a distributed algorithm based on ADMM to solve the edge-based convex relaxation~\eqref{eq.esdp}. The algorithm is proven to converge to the centralized optimizer as $O(1/\kit)$, where $\kit$ is the number of iterations. Furthermore, the computation and communication per iteration and per node do not depend on the size of the network, but only on the size of each one's neighborhood. Finally, we prove that the algorithm converges also in the case of asynchronous communication protocols and computation errors, making it robust to these two common issues in sensor networks.

%Albeit consistent distributed algorithms have been designed for sensor network localization, it seems rather surprising that important aspects linked to the nature of the sensor nodes have not been analyzed, i.e., the convergence rate (which is related to the communication overhead), noise-resilience of the distributed schemes when links are affected by noise (which may happen in practice), and robustness to computation errors\footnote{With maybe the sole exception of \cite{Simonetto2013a}.} (since it is rather unlikely that the computationally-limited sensor nodes will be able to compute rather complex semidefinite programs with high accuracy).

\section{Proposed Distributed Approach}\label{sec:distr}

\subsection{Preliminaries and Background on ADMM}

%Having advocated the richness of the semidefinite program relaxation~\eqref{eq.esdp} in terms of possibilities to handle different scenarios, 

In order to present our distributed algorithm, first of all, we rewrite the convex program~\eqref{eq.esdp} in a more compact way. Define the \emph{shared} vector
$$
\z_{i,j} := (Y_{ii}, Y_{jj}, Y_{ij}, \delta_{i,j}, d_{i,j}, \x_{i}^\transp, \x_{j}^\transp)^\transp \;\in\mathbb{R}^{5+2D},
$$
for each $(i,j) \in \mathcal{E}$ and call $\z$ the stacked vector comprised of all the $\z_{i,j}$'s. In a similar fashion, define the \emph{local} vector   
$$
\p_{i} := (\mathbold{\epsilon}_{i}^\transp, \e_{i}^\transp, Y_{ii}, \x_{i}^\transp)^\transp\;\in\mathbb{R}^{2|\mathcal{N}_{i,\mathrm{a}}|+D+1},
$$
where $\mathbold{\epsilon}_{i}$ and $\e_{i}$ are the concatenated vectors of $\epsilon_{i,k}$ and $e_{i,k}$ for all $k \in \mathcal{N}_{\mathrm{a},i}$, and call $\p$ the stacked vector of all the $\p_i$'s. We note that $\p_i$ and $\z_{i,j}$ are not independent, but this will not be an issue. Moreover, define the convex sets 
$$
\mathcal{Z}_{i,j} :=\{\z_{i,j}|\z_{i,j} \textrm{ verifies }\eqref{eq.cons.lmi1} \textrm{ and }\eqref{c.lmi3} \},
$$
$$
\mathcal{P}_{i} :=\{\p_{i}|\p_{i} \textrm{ verifies }\eqref{eq.cons.lmi2} \}.
$$
Problem~\eqref{eq.esdp} is then equivalent to
\begin{subequations}\label{eq.esdpsimplified}
\begin{align}
\minimize_{\z, \p} &&& f(\z, \p) & \label{eq.costfunction}\\
\textrm{subject to} &&& \z_{i,j} \in \mathcal{Z}_{i,j} &&  \textrm{for all } (i,j) \in \mathcal{E} \\
&&& \p_{i} \in \mathcal{P}_{i}  \hskip0.75cm  &&  \textrm{for all } i \in \mathcal{V}
\end{align}
\end{subequations}
where, for the general case,
\begin{multline}\label{cost}
f(\z, \p) := -\Big(\sum_{(i,j)\in\mathcal{E}} \ln p_{i,j}(d_{i,j}|r_{i,j}) +\\ \sum_{(i,k)\in\mathcal{E}_{\mathrm{a}}}\ln p_{i,k,\mathrm{a}}(e_{i,k}|v_{i,k})\Big) =: \sum_{i\in\mathcal{V}}  f_i(\z, \p_i).
\end{multline}
%or, for the Gaussian setting,
%\begin{multline}\label{costGN}
%f(\z, \p) := \sum_{i\in\mathcal{V}} \left( \frac{\sigma_{\nu,i,j}^{-2}}{2}\left(\sum_{j\in\mathcal{N}_i} \delta_{i,j}-2 d_{i,j} r_{i,j}\right)+\right.\\ \sigma_{\mu,i,k}^{-2} \left.\sum_{k\in\mathcal{E}_{\mathrm{a},i}}\epsilon_{i,k}-2 e_{i,k} v_{i,k}\right) =: \sum_{i\in\mathcal{V}}  f_i(\z, \p_i).
%\end{multline}

From the structure of the cost~\eqref{cost} and the problem~\eqref{eq.esdpsimplified} one can already see that the convex optimization~\eqref{eq.esdpsimplified} is separable and has $\z_{i,j}$ as complicating variables. One possible way to handle this type of optimization problems in a distributed way is employing the alternating direction  method of multipliers (ADMM). The reader is referred to~\cite{Boyd2011} for a very recent survey of this rather old technique, to the papers~\cite{Schizas2008, Zhu2009} which span possible applications of the method in signal processing, and to the mentioned recent work~\cite{Soares2012} that employs ADMM for a localization problem (albeit with a different flavor as the one presented here and applied to a different Gaussian noise-based approximated version of the original non-convex problem). In a nutshell, the strategy of ADMM is to assign copies of the coupling variable $\z_{i,j}$ to both node $i$ and node $j$ and then constrain these copies to be equal. The strength of ADMM, and the main reason of its employment in this paper, resides in its noise-resilience and computation error-resilience as well as the very loose assumptions required to guarantee its convergence (in contrast with typical dual, or primal-dual decomposition schemes.) 

In order to apply ADMM to the problem~\eqref{eq.esdpsimplified}, we define the local versions of the vector $\z_{i,j}$ as $\yiji$ and $\yijj$, meaning that $\yiji$ represents the vector $\z_{i,j}$ as seen by the node $i$, while $\yijj$ represents the vector $\z_{i,j}$ as seen by the node $j$. Call now the stacked vectors $\y_i^i$ as the ones comprised of $\yiji$ for all the $j\in\mathcal{N}_i$. We can then rewrite~\eqref{eq.esdpsimplified} in yet another \emph{equivalent} form as 
\begin{subequations}\label{eq.esdpsimplifiedadmm}
\begin{align}
\minimize_{\y_{1}^1,\dots, \y_{n}^n, \p, \z} &&& \sum_{i\in\mathcal{V}} f_i(\y_i^i, \p_i) & \label{eq.costfunction}\\
\textrm{subject to\qquad\,} &&& \hskip-1cm\yiji \in \mathcal{Z}_{i,j}, \yijj \in \mathcal{Z}_{i,j} &&\textrm{for all } (i,j) \in \mathcal{E} \\
&&& \hskip-1cm\p_{i} \in \mathcal{P}_{i} \hskip0.75cm  &&\textrm{for all } i \in \mathcal{V} \\
&&& \hskip-1.275cm\left.\begin{array}{l} \yiji - \z_{i,j} = 0\\ \yijj - \z_{i,j} = 0\end{array}\right\} &&\textrm{for all } (i,j) \in \mathcal{E}\label{coupleeq}
\end{align}
\end{subequations}

Problems~\eqref{eq.esdp}, \eqref{eq.esdpsimplified}, and \eqref{eq.esdpsimplifiedadmm} are all equivalent, but problem~\eqref{eq.esdpsimplifiedadmm} is better suited for ADMM, as we are about to see.

%\begin{remark}
%Problem~\eqref{eq.esdpsimplifiedadmm} is a rather general formulation and, as a matter of fact, also the ESDP formulation of~\cite{Wang2008}, the anchor-ML formulation~\eqref{eq.esdpanchors}, and the MHE formulation~\eqref{eq.mhe} can be written in such a form, indicating that our proposed distributed algorithm has a broader validity than solving~\eqref{eq.esdp}.  
%\end{remark}

\begin{remark}
We remark that the sequential greedy optimization (SGO) method of~\cite{Shi2010} can also be applied to~\eqref{eq.esdpsimplified}. However, its analytical properties, such as convergence rate, noise-resilience, and computation error-resilience are still unknown at the moment; furthermore, its convergence has been proven only under the strong assumption of decoupled constraints (and argued for the real case of sparse coupling constraints, see~\cite{Shi2010}, Remark 4). Nonetheless, we will implement a distributed algorithm using SGO applied to our E-ML formulation to compare its performance analytically and numerically to ADMM. We will argue that SGO, given its sequential nature, is less suitable for large-scale networks. 
\end{remark}

%\section{Proposed Distributed Approach}\label{sec:distr}

%Currently, a renewed interest surrounds ADMM; which is particularly due to the understanding of its potentials in diverse application scenarios, flexibility, generality, and ease of use. For the interested reader, a selection of references, which are relevant for the subsequent subsections, are~\cite{He2011, Wei2012, Ouyang2012}. 

%In this section, we present how the ADMM recursion can solve~\eqref{eq.esdpsimplifiedadmm} in a distributed way. We will then present a convergence result when computations are performed inexactly. Finally, we will compare it to other alternative distributed algorithms and argue its added value. 

\subsection{Proposed Algorithm}

The first step to derive the ADMM algorithm is, given a scalar $\rho>0$, defining the regularized Lagrangian of problem~\eqref{eq.esdpsimplifiedadmm} as
\begin{multline}\label{reglagr}
\La(\y,\p, \z,\la) := \\ 
\sum_{i\in\mathcal{V}} f_i(\y_i^i,\p_i) + \sum_{(i,j)\in\mathcal{E}}\left[\la_{i,j}^{i\transp}(\yiji-\z_{i,j}) + \la_{i,j}^{j\transp}(\yijj-\z_{i,j})\right] \\+ \sum_{(i,j)\in\mathcal{E}} \frac{\rho}{2}\left(\left\|\yiji-\z_{i,j} \right\|_2^2+\left\|\yijj-\z_{i,j} \right\|_2^2 \right) 
\end{multline}
where $\y$ is the shorthand notation for the vector $(\y_{1}^{1\transp},\dots, \y_{n}^{n\transp})^\transp$, while $\la$ is the shorthand notation for the vector of multipliers. To each couple of equality constraints~\eqref{coupleeq} we assign the multipliers $\la_{i,j}^i$ and $\la_{i,j}^j$. 

Solving~\eqref{eq.esdpsimplifiedadmm} with ADMM means implementing the following recursion: initialize the variables $\y^{(0)},\p^{(0)}, \z^{(0)},\la^{(0)}$, then
\begin{subequations}\label{admm0}
\begin{align}
(\y^{(\kit+1)},&\p^{(\kit+1)}) = \argmin_{\y\in\mathcal{Z}, \p\in\mathcal{P}} \{\La(\y,\p, \z^{(\kit)},\la^{(\kit)})\}, \\
\z^{(\kit+1)} &= \argmin_{\z} \{\La(\y^{(\kit+1)},\p^{(\kit+1)}, \z,\la^{(\kit)}) \}, \\
\la^{(\kit+1)} &= \la^{(\kit)} + \rho \nabla_\la\left[\La(\y^{(\kit+1)},\p^{(\kit+1)}, \z^{(\kit+1)},\la)\right]_{\la = \la^{(\kit)}} ,
\end{align}
\end{subequations}
for all $\kit\geq 0$, and with the convex sets $\mathcal{Z}$ and $\mathcal{P}$ defined as the union of the sets $\mathcal{Z}_{ij}$ for all $(i,j)\in\mathcal{E}$ and $\mathcal{P}_i$ for all $i\in\mathcal{V}$, respectively. 

In our sensor network application, the recursion~\eqref{admm0} is distributed in nature, since it can be carried out as follows. 

\begin{subequations}\label{admm1}
\begin{enumerate}
\item Set $\y_{i,j}^{i\,(0)}$, $\p_i^{(0)}$, $\z_{i,j}^{(0)}$, $\la_{i,j}^{i\,(0)}$, $\la_{i,j}^{j\,(0)}$ to zero, for all the nodes. 
\item At each iteration $\kit$, each node owns the variables $\y_{i,j}^{i\,(\kit)}$, $\p_i^{(\kit)}$, $\z_{i,j}^{(\kit)}$, $\la_{i,j}^{i\,(\kit)}$, $\la_{i,j}^{j\,(\kit)}$ for all $j\in\mathcal{N}_i$;
\item Each node updates its local variables $\y_{i,j}^{i(\kit)}$, $\p_i^{(\kit)}$ as
\begin{multline}\label{eq.argmin1}
\hskip-0.3cm(\y_{i}^{i\,(\kit+1)}, \p_i^{(\kit+1)}) = \arg\min_{\yiji \in \mathcal{Z}_{i,j}, \p_i\in\mathcal{P}_i} \left\{ f_i(\y_i^i,\p_i) + \right. \\ \left. \sum_{j\in\mathcal{N}_i} \left[\la_{i,j}^{i\,(\kit)\transp}\yiji + \frac{\rho}{2}\left\|\yiji-\z_{i,j}^{(\kit)} \right\|_2^2 \right] \right\}\,;
\end{multline}
\item Each node sends its local vector $\y_{i,j}^{i\,(\kit+1)}$ to its neighbor $j$, for all $j\in\mathcal{N}_i$;
\item Each node computes, for all $j\in\mathcal{N}_i$ 
\begin{multline}\label{eq.zit}
\z_{i,j}^{(\kit+1)} = \arg\min_{\z_{i,j}} \left\{ -\left(\la_{i,j}^{i\,(\kit)\transp} + \la_{i,j}^{j\,(\kit)\transp}\right)\z_{i,j}+\right.\\\left. \frac{\rho}{2}\left(\left\|\y_{i,j}^{i\,(\kit+1)}-\z_{i,j} \right\|_2^2+\left\|\y_{i,j}^{j\,(\kit+1)}-\z_{i,j} \right\|_2^2 \right) \right\}.
\end{multline}
We note here that since all the vectors $\y_{i,j}^{i\,(\kit+1)}$ are transmitted perfectly, the value of $\z_{i,j}^{(\kit+1)}$ computed by node $i$ is the same as the one computed by node $j$;
\item Each node computes, for all $j\in\mathcal{N}_i$
\begin{equation}\label{la.it2}
\begin{array}{ccc}\la_{i,j}^{i\,(\kit+1)} &=& \la_{i,j}^{i\,(\kit)} + \rho (\y_{i,j}^{i\,(\kit+1)} - \z_{i,j}^{(\kit+1)}) \\
\la_{i,j}^{j\,(\kit+1)} &=& \la_{i,j}^{j\,(\kit)} + \rho (\y_{i,j}^{j\,(\kit+1)} - \z_{i,j}^{(\kit+1)}).\end{array}
\end{equation}
We note that here also the values of $\la_{i,j}^{i\,(\kit+1)}$ and $\la_{i,j}^{j\,(\kit+1)}$ computed by node $i$ are the same as the ones computed by node $j$;
\item Set $\kit \leftarrow \kit+1$ and go to 2).
\end{enumerate}
\end{subequations}

We note that both the optimization problems~\eqref{eq.argmin1} and \eqref{eq.zit} are convex programs. Problem~\eqref{eq.argmin1} is an SDP (which can be solved using standard convex optimization toolboxes, such as Yalmip or CVX). In order to see this more clearly, Program~\eqref{eq.argmin1} needs to be written in the equivalent form
\begin{subequations}\label{eq.problemfinal}
\begin{align}
\minimize_{\yiji \in \mathcal{Z}_{i,j}, \p_i\in\mathcal{P}_i, \mathbold{\gamma}} &&&  f_i(\y_i^i,\p_i) + \sum_{j\in\mathcal{N}_i} \left[\la_{i,j}^{i\,(\kit)\transp}\yiji + \frac{\rho}{2} \gamma_{i,j} \right] \\
\textrm{subject to \color{white}aa\color{black} } &&& \hskip-.2cm\left(\begin{array}{cc} 1 & (\yiji-\z_{i,j}^{(\kit)} )^\transp \\ (\yiji-\z_{i,j}^{(\kit)} ) & \gamma_{i,j}\mathbf{I}_{5+2D} \end{array}\right) \succeq 0,\label{eq.gamma1}\\&&& \gamma_{i,j}\geq 0 \hskip2cm  \textrm{for all } (i,j) \in \mathcal{N}_i,\label{eq.gamma2}
\end{align}
\end{subequations}
where each $\gamma_{i,j}$ and the vector containing them $\mathbold{\gamma}$ are slack variables used to impose the quadratic penalty.

Problem~\eqref{eq.zit} is an unconstrained quadratic program in $\z_{i,j}$, whose solution is 
\begin{equation}\label{eq.z.ita}
\z_{i,j}^{(\kit+1)} = \frac{1}{2}\left[\y_{i,j}^{i\,(\kit+1)}+\y_{i,j}^{j\,(\kit+1)} + \frac{1}{\rho}\left(\la_{i,j}^{i\,(\kit)} + \la_{i,j}^{j\,(\kit)}\right)\right].
\end{equation}

We can now simplify the relations~\eqref{eq.z.ita} and \eqref{la.it2}. By using the relations~\eqref{la.it2}, we can write~\eqref{eq.z.ita} as
\begin{multline*}
\z_{i,j}^{(\kit+1)} = \frac{1}{2}\left[\y_{i,j}^{i\,(\kit+1)}+\y_{i,j}^{j\,(\kit+1)}\right] +\\ \frac{1}{\rho}\left(\la_{i,j}^{i\,(\kit-1)} + \la_{i,j}^{j\,(\kit-1)} + \rho(\y_{i,j}^{i\,(\kit)}+\y_{i,j}^{j\,(\kit)}-2 \z_{i,j}^{(\kit)})\right),
\end{multline*}
and, using again~\eqref{eq.z.ita} for $\z_{i,j}^{(\kit)}$, we obtain 
\begin{equation}\label{eq.z.itf}
\z_{i,j}^{(\kit+1)} = \frac{1}{2}\left[\y_{i,j}^{i\,(\kit+1)}+\y_{i,j}^{j\,(\kit+1)}\right].
\end{equation}

Furthermore, the following relations hold as by-products\footnote{Recall that we have set $\la_{i,j}^{i\,(0)} = \mathbf{0}$ and $\la_{i,j}^{j\,(0)} = \mathbf{0}$ and apply relations~\eqref{la.it2} and \eqref{eq.z.itf} recursively.} of~\eqref{eq.z.itf} for all $\kit \geq 0$:
\begin{equation*}
\la_{i,j}^{i\,(\kit+1)} + \la_{i,j}^{j\,(\kit+1)} = 0, \la_{i,j}^{i\,(\kit+1)} = \sum_{\kappa=1}^{\kit+1} \frac{\rho}{2}\left[\y_{i,j}^{i\,(\kappa)}-\y_{i,j}^{j\,(\kappa)}\right].
\end{equation*}

This simplifies the ADMM algorithm defined by the iterations~\eqref{admm1} (as summarized in Algorithm~\ref{alg.ADMM}), in particular the computation of $\la_{i,j}^{j\,(\kit+1)}$ is no longer required (as not needed any more for the computation of $\z_{i,j}$). 

\subsection{Properties of Algorithm~\ref{alg.ADMM} (Ideal)}

We now analyze the analytical properties of Algorithm~\ref{alg.ADMM} in terms of convergence and convergence rate of its solution to the optimal solution of the centralized problem~\eqref{eq.esdp}. As a by-product, we also characterize the number of iterations required to reach a given accuracy and the total communication cost. 

Let $\q$ denote the stacked vector of the optimization variables, i.e., $\q:=({\y}^\transp, {\p}^\transp, \z^\transp)^\transp$, and let $\bar{\q}_\kit$ represent the running averages, i.e., 
\begin{equation}\label{eq.runningaverages} 
\bar{\q}_\kit = \frac{1}{\kit+1}\sum_{\kappa = 0}^\kit {\q}^{(\kappa)},
\end{equation}
with ${\q}^{(\kappa)} = ({\y}^{(\kappa)\transp}\hskip-0.1cm, {\p}^{(\kappa)\transp}\hskip-0.1cm, \z^{(\kappa)\transp})^\transp$. Assume that the initial convex problem~\eqref{eq.esdpsimplifiedadmm} admits a solution and let $(\q^*, \la^*)$ be this solution. Then the following convergence theorem holds. 

\begin{theorem}~\label{theo.convergence}
Let $\q^{(\kit)}$ be the solution generated with Algorithm~\ref{alg.ADMM} applied to the convex problem~\eqref{eq.esdpsimplifiedadmm}, whose solution is denoted by $(\q^*, \la^*)$. Let $\bar{\q}_\kit$ be defined as \eqref{eq.runningaverages}. Let the graph $\mathcal{G}$ be connected (Assumption~\ref{ass.conn}). The following relations hold:
\begin{enumerate}
\item[\emph{(a)}] $\displaystyle 0\leq \La(\bar{\q}_\kit, \la^*) - \La(\q^*, \la^*) \leq \frac{C_0}{\kit+1}$,
\item[\emph{(b)}] $\displaystyle \lim_{\kit\to\infty}||{\q}^{(\kit)}-\q^*||\to 0$,
%\item[\emph{(c)}] $\displaystyle \lim_{\kit\to\infty}||\bar{\q}_\kit-\q^*||\to 0$,
\end{enumerate}
where $C_0\geq 0$ is a constant that depends on the distance of the initial guess to the optimal solution, i.e., $||{\q}^{(0)} - \q^*||$ and $||\la^{(0)}-\la^*||$, and on the parameter $\rho$.
\end{theorem}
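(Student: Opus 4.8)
The plan is to recognise Algorithm~\ref{alg.ADMM} as the classical two-block alternating direction method of multipliers applied to~\eqref{eq.esdpsimplifiedadmm}, and to push it through the now-standard variational-inequality analysis that yields an $O(1/\kit)$ ergodic rate (see~\cite{Boyd2011} for background). First I would cast~\eqref{eq.esdpsimplifiedadmm} in canonical form: the first block is $(\y,\p)$ carrying the separable objective $F(\y,\p):=\sum_{i\in\mathcal{V}}f_i(\y_i^i,\p_i)+\iota_{\mathcal{Z}}(\y)+\iota_{\mathcal{P}}(\p)$, with $\iota_{\mathcal S}$ the $\{0,+\infty\}$-valued indicator of $\mathcal S$; the second block is $\z$ with zero objective; and the coupling~\eqref{coupleeq} is a linear constraint $A(\y,\p)+B\z=0$ with $\{0,\pm1\}$ matrices $A,B$. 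The recursion~\eqref{admm0}--\eqref{admm1} is then exactly ADMM on this splitting, and the by-product identities $\la_{i,j}^{i(\kit)}+\la_{i,j}^{j(\kit)}=0$ and $\z_{i,j}^{(\kit+1)}=\tfrac12(\y_{i,j}^{i(\kit+1)}+\y_{i,j}^{j(\kit+1)})$ derived in the text let us reduce the dual state to a single multiplier per edge. For the lower bound in (a): since $(\q^*,\la^*)$ is a saddle point, $\q\mapsto\La(\q,\la^*)$ is minimised over the feasible set at $\q^*$; because the first $(\y,\p)$-update in~\eqref{eq.argmin1} already lands in $\mathcal{Z}\times\mathcal{P}$ and the $\z$-block is unconstrained, the running average $\bar\q_\kit$ is a convex combination of feasible iterates, hence feasible (convexity of $\mathcal Z,\mathcal P$), so $\La(\bar\q_\kit,\la^*)\ge\La(\q^*,\la^*)$.

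For the upper bound I would write the first-order optimality conditions of the two subproblems~\eqref{eq.argmin1} and~\eqref{eq.zit} as variational inequalities (working with subdifferentials, not gradients, because of the set constraints), insert the dual recursion~\eqref{la.it2}, and combine them after setting their free arguments to $(\q^*,\la^*)$. Completing squares produces the key per-iteration estimate
\[
0\le \La(\q^{(\kit+1)},\la^*)-\La(\q^*,\la^*)\le V^{(\kit)}-V^{(\kit+1)}-R^{(\kit+1)},
\]
where $V^{(\kit)}:=\tfrac{1}{2\rho}\|\la^{(\kit)}-\la^*\|^2+\tfrac{\rho}{2}\|\z^{(\kit)}-\z^*\|^2\ge 0$ is the ADMM Lyapunov function and $R^{(\kit+1)}\ge 0$ collects the squared primal residual $\|A(\y^{(\kit+1)},\p^{(\kit+1)})+B\z^{(\kit+1)}\|^2$ together with the block-movement term $\|\z^{(\kit+1)}-\z^{(\kit)}\|^2$. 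Summing over $\kappa=0,\dots,\kit$, telescoping, dropping the nonnegative $R$ terms, dividing by $\kit+1$, and invoking Jensen's inequality for the convex map $\q\mapsto\La(\q,\la^*)$ gives $\La(\bar\q_\kit,\la^*)-\La(\q^*,\la^*)\le C_0/(\kit+1)$ with $C_0=V^{(0)}=\tfrac{1}{2\rho}\|\la^{(0)}-\la^*\|^2+\tfrac{\rho}{2}\|\z^{(0)}-\z^*\|^2$, which is precisely the advertised dependence on $\|\la^{(0)}-\la^*\|$, on $\|\q^{(0)}-\q^*\|$ (through its $\z$-component), and on $\rho$. This proves (a).

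For (b), the same telescoped sum shows $\sum_\kappa R^{(\kappa)}<\infty$, hence the primal residual $A(\y^{(\kit)},\p^{(\kit)})+B\z^{(\kit)}\to 0$ and $\z^{(\kit+1)}-\z^{(\kit)}\to 0$; moreover $V^{(\kit)}$ is nonincreasing, so $\{(\z^{(\kit)},\la^{(\kit)})\}$ is bounded and, via the vanishing residual and the structure of $A,B$, so is $\{(\y^{(\kit)},\p^{(\kit)})\}$. Extract a subsequence converging to some $(\tilde\q,\tilde\la)$; passing to the limit in the subproblem optimality conditions, together with the vanishing residual, shows $(\tilde\q,\tilde\la)$ is a saddle point of~\eqref{eq.esdpsimplifiedadmm}. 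Re-running the monotonicity argument with $(\tilde\q,\tilde\la)$ replacing $(\q^*,\la^*)$, the corresponding Lyapunov function is nonincreasing and converges to $0$ along the subsequence, hence converges to $0$; therefore $\z^{(\kit)}\to\tilde\z$ and $\la^{(\kit)}\to\tilde\la$, and the vanishing residual then forces $(\y^{(\kit)},\p^{(\kit)})\to(\tilde\y,\tilde\p)$, so $\q^{(\kit)}\to\tilde\q$, an optimal solution of~\eqref{eq.esdpsimplifiedadmm} and hence of~\eqref{eq.esdp} (connectedness of $\mathcal G$, Assumption~\ref{ass.conn}, being the standing hypothesis under which this edge-based reformulation faithfully represents~\eqref{eq.esdp}).

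The main obstacle I expect is the careful derivation of the key per-iteration inequality: one must correctly exploit the simplifying identities $\la_{i,j}^{i(\kit)}+\la_{i,j}^{j(\kit)}=0$ and $\z_{i,j}^{(\kit+1)}=\tfrac12(\y_{i,j}^{i(\kit+1)}+\y_{i,j}^{j(\kit+1)})$, and — most delicately — dispose of the ``crossing term'' $\rho\langle\z^{(\kit+1)}-\z^{(\kit)},\cdot\rangle$ that appears because the $(\y,\p)$-update in~\eqref{eq.argmin1} penalises against $\z^{(\kit)}$ rather than $\z^{(\kit+1)}$, so that the estimate telescopes cleanly into $V^{(\kit)}-V^{(\kit+1)}$. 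Everything else is bookkeeping with the standard ADMM machinery.
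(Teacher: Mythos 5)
Your proposal is correct in substance, but it is a genuinely different route in the sense that the paper does not reprove the ADMM rate at all: its proof of part (a) consists of verifying that the sets $\mathcal{Z}_{i,j}$ and $\mathcal{P}_i$ are closed and convex and the $f_i$ proper and convex, and then invoking \cite[Theorem 4.1]{He2011}; part (b) is obtained by noting that the coupling \eqref{coupleeq} is a full-column-rank linear system and invoking \cite[Theorem 1]{Mota2013}. What you have written is, in effect, a self-contained reconstruction of the proofs of those two cited results — the variational-inequality/Lyapunov telescoping plus Jensen for the ergodic bound, and the summable-residual/subsequence argument for primal convergence — and the reconstruction follows the standard pattern correctly. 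It buys transparency: in particular it makes the constant explicit, whereas the paper only discloses $C_0=(2\rho\|\z^{(0)}-\z^*\|_2^2+\rho^{-1}\|\la^{(0)}-\la^*\|_2^2)/2$ later, inside the proof of Proposition~\ref{prop.time}; note that your $\tfrac{\rho}{2}\|\z^{(0)}-\z^*\|^2$ differs from this by a factor of $2$ because each $\z_{i,j}$ appears in two coupling equations, so $\|B\z\|^2=2\|\z\|^2$ — harmless for the theorem as stated, but worth tracking if you write the estimate out. Two small cautions. First, the $\p$-variables do not enter \eqref{coupleeq} at all, so the vanishing primal residual controls $\y^{(\kit)}$ but not $\p^{(\kit)}$; convergence of the $\p$-block must be extracted from the subproblem optimality conditions and the overlap between $\p_i$ and $\y_i^i$ (the paper's one-line citation of \cite{Mota2013} silently absorbs the same issue, so this is not a gap relative to the paper). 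Second, connectedness of $\mathcal{G}$ is not what makes the edge-based reformulation faithful — \eqref{eq.esdp}, \eqref{eq.esdpsimplified}, and \eqref{eq.esdpsimplifiedadmm} are equivalent by construction — it is simply a standing modeling assumption, so your parenthetical at the end of part (b) slightly misattributes its role.
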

\begin{proof}
Since, in the problem~\eqref{eq.esdpsimplifiedadmm}, the sets $\mathcal{Z}_{i,j}$ and $\mathcal{P}_i$ are closed and convex, and the costs $f_i$ are proper and convex, the part \emph{(a)} of the proof follows from \cite[Theorem 4.1]{He2011}. Since the constraint \eqref{coupleeq} defines a linear system with full-column rank, the part \emph{(b)} of the proof follows from \cite[Theorem 1]{Mota2013}.
\end{proof}

\begin{algorithm}[t]
\caption{Distributed ADMM Algorithm for Problem~\eqref{eq.esdpsimplifiedadmm}}\label{alg.ADMM}\footnotesize
%\begin{algorithmic}[1]
Set $\y_{i,j}^{i\,(0)}$, $\p_i^{(0)}$, $\z_{i,j}^{(0)}$, $\la_{i,j}^{i\,(0)}$ to zero, for all the nodes \\
\textbf{Input:} $\y_{i,j}^{i\,(\kit)}$, $\p_i^{(\kit)}$, $\z_{i,j}^{(\kit)}$, $\la_{i,j}^{i\,(\kit)}$, for all $j\in\mathcal{N}_i$\\
\color{white}aa\color{black}\vrule \color{white}aa\color{black}\begin{minipage}{0.45\textwidth}
\textbf{1:} Each node update its local variables $\y_{i,j}^{i(\kit)}$, $\p_i^{(\kit)}$ to $\y_{i,j}^{i(\kit+1)}$, $\p_i^{(\kit+1)}$ by the convex program~\eqref{eq.problemfinal} up to a defined accuracy ${\varepsilon}$\\ 
\textbf{2:} Each node sends its local vector $\y_{i,j}^{i\,(\kit+1)}$ to its neighbor $j$, for all $j\in\mathcal{N}_i$ \\
\textbf{3:} Each node computes, for all $j\in\mathcal{N}_i$ its $\z_{i,j}^{(\kit+1)}$ via Equation~\eqref{eq.z.itf}\\
\textbf{4:} Each node computes, for all $j\in\mathcal{N}_i$ 
\begin{equation*}\label{la.it}
\la_{i,j}^{i\,(\kit+1)} = \la_{i,j}^{i\,(\kit)} + \rho (\y_{i,j}^{i\,(\kit+1)} - \z_{i,j}^{(\kit+1)})
\end{equation*}
\end{minipage}\\
\textbf{Output:} $\y_{i,j}^{i\,(\kit+1)}$, $\p_i^{(\kit+1)}$, $\z_{i,j}^{(\kit+1)}$, $\la_{i,j}^{i\,(\kit+1)}$, for all $j\in\mathcal{N}_i$
%\end{algorithmic}
\end{algorithm}

Let now local Lagrangian functions be 
\begin{multline*}
\La_i(\y_{i}^{i}, \p_i,  \z_{i}, \la_{i}^i):=\\f_i(\y_i^i,\p_i) + \sum_{j\in\mathcal{N}_i} \left[\la_{i,j}^{i\transp}\yiji + \frac{\rho}{2}\left\|\yiji-\z_{i,j} \right\|_2^2 \right],
\end{multline*}
where $\z_{i}$ and $\la_i^i$ are the stacked vectors of the $\z_{i,j}$'s and $\la_{i,j}^i$ for $j \in \mathcal{N}_i$, respectively\footnote{
%----------
By these definitions, the total Lagrangian~\eqref{reglagr} can now be written as $\La(\y, \p,  \z, \la) = \sum_{i\in\mathcal{V}} \La_i(\y_{i}^{i}, \p_i,  \z_{i}, \la_{i}^i)$, and the update~\eqref{eq.argmin1} reads $(\y_{i}^{i\,(\kit+1)}, \p_i^{(\kit+1)}) = \arg\min_{\yiji \in \mathcal{Z}_{i,j}, \p_i\in\mathcal{P}_i} \left\{\La_i(\y_{i}^{i}, \p_i,  \z_{i}^{(\kit)}, \la_{i}^{i (\kit)}) \right\}.$
%----------
}. Assume we are interested in determining how many iterations $t$ are needed to reach a given average local accuracy $\eta \geq 0$, meaning, 
\begin{equation}
0\leq \frac{1}{n}\sum_{i\in\mathcal{V}}\left(\La_i(\bar{\q}_{i,\kit}, \la_{i}^{i*}) - \La_i({\q}_{i}^*, \la_{i}^{i*})\right) \leq \eta,\label{curly}
\end{equation}
where $\bar{\q}_{i,\kit}$ is the running average (as in~\eqref{eq.runningaverages}) of the local vector $\q_i^{(\kit)}:=(\y_{i}^{i(\kit)\transp}, \p_i^{(\kit)\transp},  \z_{i}^{(\kit)\transp})^\transp$. The following proposition holds. 

\begin{proposition}\label{prop.time}
Let $\q_i^{(\kit)}$ be the local solution generated with Algorithm~\ref{alg.ADMM} applied to the convex problem~\eqref{eq.esdpsimplifiedadmm}, whose solution for sensor node $i$ is denoted by $(\q_i^*, \la_i^{i*})$. Let $\bar{\q}_{i,\kit}$ be defined as \eqref{eq.runningaverages} for $\q_i^{(\kit)}$. Let the graph $\mathcal{G}$ be connected (Assumption~\ref{ass.conn}). Let Algorithm~~\ref{alg.ADMM} be initialized with $\z^{(0)} = \mathbf{0}$ and $\la^{(0)} = \mathbf{0}$. 
Let $\eta$ be a given average local accuracy level, as expressed in~\eqref{curly}. If the number of iterations $\kit$ is chosen as
\begin{equation*}
\kit\geq \kit_{\eta} := \max_{i\in\mathcal{V}}\left\lceil \frac{1}{2 \rho \eta}(\rho^2 ||\z_i^{*}||_2^2 + ||\la_i^{i *}||_2^2) +1 \right\rceil,
\end{equation*}
where $\lceil\cdot\rceil$ represents the ceiling operator, then the accuracy $\eta$ is reached. 
\end{proposition}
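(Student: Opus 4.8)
The plan is to bound the per-node suboptimality $\La_i(\bar{\q}_{i,\kit}, \la_i^{i*}) - \La_i(\q_i^*, \la_i^{i*})$ directly, using the same $O(1/\kit)$ machinery already invoked in Theorem~\ref{theo.convergence}(a) but tracking the constant $C_0$ explicitly rather than abstractly. The key observation is that the total Lagrangian splits as $\La = \sum_{i\in\mathcal{V}} \La_i$ (the footnote after Theorem~\ref{theo.convergence} records this), so the aggregate bound from \cite[Theorem 4.1]{He2011} can be disaggregated: that theorem's proof gives $\La(\bar{\q}_\kit,\la^*) - \La(\q^*,\la^*) \le \frac{1}{\kit+1}\big(\tfrac{\rho}{2}\|\z^{(0)}-\z^*\|_2^2 + \tfrac{1}{2\rho}\|\la^{(0)}-\la^*\|_2^2\big)$, and with the stated initialization $\z^{(0)}=\mathbf 0$, $\la^{(0)}=\mathbf 0$ the right-hand side becomes $\frac{1}{\kit+1}\cdot\frac{1}{2\rho}\big(\rho^2\|\z^*\|_2^2 + \|\la^*\|_2^2\big)$. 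Splitting $\|\z^*\|_2^2 = \sum_i$ (the appropriate partial sums over $j\in\mathcal N_i$) and similarly for $\la^*$, and using the nonnegativity of each local gap $\La_i(\bar{\q}_{i,\kit},\la_i^{i*})-\La_i(\q_i^*,\la_i^{i*})\ge 0$ (which itself needs a short argument — see below), the averaged local gap is at most $\frac{1}{n}\cdot\frac{1}{2\rho(\kit+1)}\sum_i(\rho^2\|\z_i^*\|_2^2 + \|\la_i^{i*}\|_2^2)$. Then I would bound the sum by $n$ times its max over $i$, so the averaged gap is at most $\frac{1}{2\rho(\kit+1)}\max_{i}(\rho^2\|\z_i^*\|_2^2 + \|\la_i^{i*}\|_2^2)$. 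Setting this $\le\eta$ and solving for $\kit$ gives $\kit+1 \ge \frac{1}{2\rho\eta}\max_i(\rho^2\|\z_i^*\|_2^2+\|\la_i^{i*}\|_2^2)$, i.e. $\kit \ge \max_i\big\lceil \frac{1}{2\rho\eta}(\rho^2\|\z_i^*\|_2^2+\|\la_i^{i*}\|_2^2) + 1\big\rceil$, which is exactly $\kit_\eta$.

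Concretely the steps are: (1) quote the explicit form of the constant in the $O(1/\kit)$ VI/Lagrangian bound of \cite{He2011} specialized to this ADMM splitting, being careful about which variable block ($\z$) carries the primal term and which ($\la$) the dual term; (2) substitute $\z^{(0)}=\la^{(0)}=\mathbf 0$ to kill the cross and initial-iterate terms; (3) additively decompose both squared norms along the node partition induced by $\{j\in\mathcal N_i\}$ and by connectedness of $\mathcal G$ (Assumption~\ref{ass.conn}), which guarantees every edge — hence every $\z_{i,j}$ and every multiplier — is accounted for; (4) bound the sum by $n\max_i$; (5) divide by $n$, impose $\le\eta$, and invert to get the ceiling expression.

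The main obstacle — and the only genuinely nontrivial point — is step (3) together with establishing $\La_i(\bar{\q}_{i,\kit},\la_i^{i*}) - \La_i(\q_i^*,\la_i^{i*}) \ge 0$ for each $i$ separately. The aggregate nonnegativity in Theorem~\ref{theo.convergence}(a) is a saddle-point inequality for the \emph{total} Lagrangian and does not automatically pass to summands; I would need that $(\q_i^*,\la_i^{i*})$ is a saddle point of the \emph{local} Lagrangian $\La_i$ in the appropriate restricted sense, which follows because $\q_i^*$ minimizes $\La_i(\cdot,\la_i^{i*})$ over $\mathcal Z_{i,j}\times\mathcal P_i$ (the optimality conditions for~\eqref{eq.esdpsimplifiedadmm} decouple across $i$ once $\z$ and $\la$ are fixed at their optimal values, since $f=\sum_i f_i$ and the only inter-node coupling is through the now-fixed consensus constraints). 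A secondary delicate point is the exact value of the constant multiplying $\|\la^{(0)}-\la^*\|^2$ in \cite{He2011} — whether it is $\frac{1}{2\rho}$ or $\frac{1}{\rho}$ or involves the norm of the constraint matrix; since~\eqref{coupleeq} has each block-row equal to an identity coupling (coefficient matrix with unit-norm rows), the clean $\frac{1}{2\rho}$ is what reproduces the stated $\kit_\eta$, and I would verify this matches \cite[Theorem 4.1]{He2011} under their normalization. Everything after that is routine algebra.
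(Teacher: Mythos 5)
Your proposal follows essentially the same route as the paper: invoke the explicit constant from \cite[Theorem 4.1]{He2011}, use the additive decomposition $\La=\sum_{i\in\mathcal{V}}\La_i$ together with the zero initialization and the edge double-counting $\sum_{i\in\mathcal{V}}\|\z_i^*\|_2^2 = 2\|\z^*\|_2^2$ (and $\sum_{i\in\mathcal{V}}\|\la_i^{i*}\|_2^2=\|\la^*\|_2^2$), bound the sum by $n$ times its maximum, and divide by $n$. The one point you flag as the main obstacle --- establishing nonnegativity of each local gap $\La_i(\bar{\q}_{i,\kit},\la_i^{i*})-\La_i(\q_i^*,\la_i^{i*})$ separately --- is not actually needed, because the accuracy condition~\eqref{curly} constrains only the average (equivalently the sum over $i$), whose nonnegativity and upper bound both follow directly from the aggregate saddle-point inequality in part \emph{(a)} of Theorem~\ref{theo.convergence}.
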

\begin{proof}
The proof follows from Theorem~\ref{theo.convergence}. From point \emph{(a)} of Theorem~\ref{theo.convergence},
\begin{multline}
0\leq \La(\bar{\q}_\kit, \la^*) - \La(\q^*, \la^*)  =\\  \sum_{i\in\mathcal{V}}\left( \La_i(\bar{\q}_{i,\kit}, \la_{i}^{i*}) - \La_i({\q}_{i}^*, \la_{i}^{i*}) \right) \leq\frac{C_0}{\kit+1}. \label{theodummy1}
\end{multline}
From \cite[Theorem 4.1]{He2011}, $C_0$ can be expressed as 
\begin{multline}
C_0 = (2\rho ||\z^{(0)} - \z^*||_2^2 + \rho^{-1}||\la^{(0)}-\la^*||_2^2 )/2 = \\
\sum_{i\in\mathcal{V}}(\rho ||\z_{i}^{*}||_2^2 + \rho^{-1}||\la_i^{i*}||_2^2 )/2,\label{theodummy2}
\end{multline}
where the last simplification is due to the initialization of $\z^{(0)}$ and $\la^{(0)}$ at zero, and $\z_{i,j}=\z_{j,i}$. Combining \eqref{theodummy1}, \eqref{theodummy2}, and \eqref{curly} we obtain
\begin{multline*}
\frac{1}{n}\sum_{i\in\mathcal{V}}\left(\La_i(\bar{\q}_{i,\kit}, \la_{i}^{i*}) - \La_i({\q}_{i}^*, \la_{i}^{i*})\right)\leq\\  \frac{\max_{i\in\mathcal{V}}\{\rho^2 ||\z_{i}^{*}||_2^2 + ||\la_i^{i*}||_2^2\}}{2\rho (t+1)} = \eta
\end{multline*}
from which the claim follows. 
\end{proof}
Proposition~\ref{prop.time} says that the number of iterations for a given average local accuracy does not depend on the network size, but only on the worst local initial error. We can also characterize the total communication cost for node $i$ to reach a given accuracy level (which also does not depend on the network size) as follows. 
\begin{corollary}\label{coro.comm}
Under the same premises of Proposition~\ref{prop.time}, the communication cost ${c}_i$ for sensor node $i$ (i.e., the number of scalar numbers to send) to reach a desired average local accuracy $\eta$ is lower bounded by $c_i \geq 9|\mathcal{N}_i| \kit_{\eta}.$
\end{corollary}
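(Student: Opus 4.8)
The plan is to count exactly how many scalars sensor node $i$ transmits during one sweep of Algorithm~\ref{alg.ADMM}, and then multiply by the iteration budget supplied by Proposition~\ref{prop.time}. First I would go through the four steps of Algorithm~\ref{alg.ADMM} and argue that the only step involving a transmission is Step~2: Step~1 solves the local SDP~\eqref{eq.problemfinal}, whose data $\z_{i,j}^{(\kit)}$ and $\la_{i,j}^{i\,(\kit)}$ node $i$ already holds; Step~3 forms $\z_{i,j}^{(\kit+1)}$ via~\eqref{eq.z.itf} from $\y_{i,j}^{i\,(\kit+1)}$ (its own copy) and $\y_{i,j}^{j\,(\kit+1)}$ (received during Step~2); and Step~4 updates $\la_{i,j}^{i\,(\kit+1)}$ from quantities already in hand --- recall that the simplification leading to~\eqref{eq.z.itf} eliminated $\la_{i,j}^{j}$ entirely, so nothing extra is exchanged there, and the running averages $\bar{\q}_{i,\kit}$ of~\eqref{eq.runningaverages} are purely local. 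In Step~2, node $i$ sends $\y_{i,j}^{i\,(\kit+1)}$ to each neighbor $j\in\mathcal{N}_i$, and since $\y_{i,j}^i$ is a copy of $\z_{i,j}=(Y_{ii},Y_{jj},Y_{ij},\delta_{i,j},d_{i,j},\x_i^\transp,\x_j^\transp)^\transp\in\mathbb{R}^{5+2D}$, that is $(5+2D)|\mathcal{N}_i|$ scalars per iteration.

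Next I would invoke Proposition~\ref{prop.time}: running Algorithm~\ref{alg.ADMM} for $\kit_\eta$ iterations already guarantees the average local accuracy $\eta$, so node $i$'s total outgoing scalar count is $c_i=(5+2D)|\mathcal{N}_i|\kit_\eta$. Since the ambient dimension in the scenarios of interest satisfies $D\in\{2,3\}$, we have $5+2D\ge 9$, hence $c_i\ge 9|\mathcal{N}_i|\kit_\eta$, which is the assertion; the fact that $\kit_\eta$ --- and therefore $c_i$ --- does not depend on the network size $n$ is inherited verbatim from Proposition~\ref{prop.time}.

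I do not expect a genuinely hard step here: the whole argument is a counting exercise. The only delicate point --- the closest thing to an obstacle --- is making the bookkeeping airtight: checking that each of Steps~1, 3 and 4 and the averaging~\eqref{eq.runningaverages} really is communication-free given what a node retains from the previous round, that the $\z$- and $\la$-consistency observations following~\eqref{eq.zit}--\eqref{la.it2} genuinely spare node $i$ from ever having to receive $\z_{i,j}$ or $\la_{i,j}^{j}$ from $j$, and that one consistently counts ``scalars sent'' rather than ``sent plus received.'' Once those checks are in place, the corollary follows immediately.
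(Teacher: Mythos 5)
Your argument is correct and is essentially the paper's own: the paper proves this corollary by citing the per-iteration communication count of Section~\ref{sec:comparisonadmm} (where, with $D=2$, node $i$ sends the $9$-dimensional vector $\y_{i,j}^{i}$ to each of its $|\mathcal{N}_i|$ neighbors in Step~2, all other steps being communication-free) and multiplying by the iteration budget $\kit_\eta$ of Proposition~\ref{prop.time}. Your only addition is carrying the general dimension $5+2D$ and observing $5+2D\ge 9$ for $D\in\{2,3\}$, which is a harmless (slightly more general) rendering of the same counting exercise.
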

\begin{proof}
Straightforward given the communication cost counting of Section~\ref{sec:comparisonadmm} and Proposition~\ref{prop.time}.
\end{proof}
\vskip0.2cm

Theorem~\ref{theo.convergence} indicates an $O(1/\kit)$ rate of convergence of Algorithm~\ref{alg.ADMM} in ergodic sense (i.e., in the sense of the running average vector).  We note that this $O(1/\kit)$ convergence is \emph{fast} if one looks at the very loose assumptions. As a matter of fact, $f$ could also have been non-differentiable; we report that typically non-differentiable problems solved using sub-gradient algorithms converge as $O(1/\sqrt{\kit})$ \cite{Duchi2012}.%, while we must assume Lipschitz continuity of the gradient of $f$, i.e., $f\in\mathcal{C}^{1,1}$, to obtain the better rate of $O(1/{\kit}^2)$, \cite{Necoara2008,Duchi2012 ,Wei2012}. 

The mentioned $O(1/\kit)$ convergence rate assumes perfect and synchronous communication at step 4) and that the optimizations at steps 3) and 5) are carried out exactly. In real situations, these are rather restrictive requirements. In practice, communication is affected by noise~\cite{Schizas2008}, packages can be dropped, and it is in general asynchronous among the sensor nodes. In addition, the often limited computational capabilities of the sensor nodes limit the possibility to obtain highly accurate solutions for the SDP in step 3). The strength of ADMM is however to be resilient to these issues, which in turn means that ADMM can be employed and convergence can be guaranteed also with these issues present~\cite{Boyd2011}. In this paper, we decided to focus on the loss of synchronicity and limited computation capabilities problems, since we believe they are the most critical ones in our application. 

\subsection{Properties of Algorithm~\ref{alg.ADMM} (Asynchronous)}

First of all, we consider asynchronous communication. Following the main bulk of research in ADMM, we consider an edge set perspective. Suppose that at iteration $\kit$ only a subset of all the existing links is activated, denoted by $\mathcal{E}^{(\kit)}$, and suppose that Algorithm~\ref{alg.ADMM} is run in an asynchronous fashion, where at each iteration $\kit$ we only consider the variables associated with $\mathcal{E}^{(\kit)}$ and we communicate only through $\mathcal{E}^{(\kit)}$. At each iteration $\kit$, we let the symmetric adjacency matrix associated with $\mathcal{E}^{(\kit)}$ be denoted as $\mathbf{A}^{(\kit)}$. We further assume the following. 

\begin{assumption}\label{ass.ber}
At each iteration $\kit$ the symmetric adjacency matrix $\mathbf{A}^{(\kit)}$ is generated by an i.i.d. Bernoulli process with $\mathrm{Pr}[[\mathbf{A}^{(\kit)}]_{ij} = 1]=s_{ij}>0$ for all $(i,j)\in\mathcal{E}$, with a given probability $0< s_{ij}\leq 1$.
\end{assumption}

\begin{assumption}\label{ass.union}
Let $\mathcal{G}^{(\kit)}:=(\mathcal{V},\mathcal{E}^{(\kit)})$. For every $\kit'\geq 0$, there exists an integer $T>0$ such that: 
\begin{enumerate}
\item[\emph{(i)}] the union of the edge sets satisfies $\bigcup_{\ell=\kit'}^{\kit'+T} \mathcal{E}^{(\ell)} = \mathcal{E} $;
\item[\emph{(ii)}] the union graph, i.e., $\bigcup_{\ell=\kit'}^{\kit'+T}  \mathcal{G}^{(\ell)}$, is connected. 
\end{enumerate}
\end{assumption}

These assumptions are rather standard in stochastic distributed optimization~\cite{Duchi2012, Iutzeler2013}. The convergence of Algorithm~\ref{alg.ADMM} under asynchronous communication can now be formally stated as follows.  

\begin{theorem}\label{theo.asynch}
Let $\q^{(\kit),\textrm{asy}} = ({\y}^{(\kit) \transp}, {\p}^{(\kit) \transp}, \z^{(\kit) \transp})^\transp$ be the solution generated by Algorithm~\ref{alg.ADMM} run in an asynchronous fashion, where at each iteration only a subset of edges are active. Let $(\q^*, \la^*)$ be the solution of the convex problem~\eqref{eq.esdpsimplifiedadmm}. Under Assumptions~\ref{ass.ber} and \ref{ass.union},
$$
\lim_{\kit\to\infty}\left\|\q^{(\kit),\textrm{asy}} - \q^*\right\| \to 0,\quad \textrm{almost surely}.
$$
\end{theorem}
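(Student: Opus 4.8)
The plan is to recast Algorithm~\ref{alg.ADMM} run on the random edge sequence as a randomized (edge-block) coordinate iteration of the deterministic ADMM operator, and then prove almost-sure convergence by a stochastic quasi-Fej\'er / supermartingale argument in the spirit of~\cite{Iutzeler2013}. The key structural observation is that, after the reductions leading to~\eqref{eq.z.itf}, the state is fully described by the per-edge pairs $(\z_{i,j},\la_{i,j}^i)$ (with $\la_{i,j}^j=-\la_{i,j}^i$), and when an edge $(i,j)$ is activated its updates depend only on data local to that edge and on $\p_i,\p_j$; the update is the identity on all non-activated edges. So a natural Lyapunov function is $V^{(\kit)}:=\sum_{(i,j)\in\mathcal{E}}\big(\rho\,\|\z_{i,j}^{(\kit)}-\z_{i,j}^*\|_2^2+\rho^{-1}\|\la_{i,j}^{i(\kit)}-\la_{i,j}^{i*}\|_2^2\big)$, evaluated at a primal--dual solution $(\q^*,\la^*)$ of~\eqref{eq.esdpsimplifiedadmm}.

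First I would record the deterministic, edge-decomposable descent inequality of ADMM already invoked for Theorem~\ref{theo.convergence} (\cite{He2011,Mota2013}): for every edge $(i,j)$ there is a nonnegative quantity $\Delta_{i,j}^{(\kit)}$ — a sum of squared primal residuals $\|\yiji-\z_{i,j}^{(\kit+1)}\|_2^2$, $\|\yijj-\z_{i,j}^{(\kit+1)}\|_2^2$ and of $\|\z_{i,j}^{(\kit+1)}-\z_{i,j}^{(\kit)}\|_2^2$ — such that the per-edge term drops by $\Delta_{i,j}^{(\kit)}$ if $(i,j)$ is updated at step $\kit$ and is unchanged otherwise. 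Taking conditional expectation with respect to the natural filtration $\mathcal{F}^{(\kit)}$: by Assumption~\ref{ass.ber} the matrix $\mathbf{A}^{(\kit)}$ is independent of $\mathcal{F}^{(\kit)}$ and activates $(i,j)$ with probability $s_{ij}>0$, so $\mathbb{E}[V^{(\kit+1)}\mid\mathcal{F}^{(\kit)}]\le V^{(\kit)}-\sum_{(i,j)\in\mathcal{E}}s_{ij}\,\Delta_{i,j}^{(\kit)}$. Hence $V^{(\kit)}$ is a nonnegative supermartingale; by the Robbins--Siegmund theorem it converges a.s. to a finite limit and $\sum_{\kit}\sum_{(i,j)}s_{ij}\Delta_{i,j}^{(\kit)}<\infty$ a.s., and since $s_{ij}>0$ this forces $\Delta_{i,j}^{(\kit)}\to0$ a.s. for every edge — all residuals and consecutive-iterate differences on updated edges vanish.

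Next I would promote "vanishing on updated edges" to "vanishing along the whole sequence" using Assumption~\ref{ass.union}: in every window of $T$ steps each edge is touched and the union graph is connected, so a telescoping/triangle-inequality argument — together with a.s. boundedness of $(\z^{(\kit)},\la^{(\kit)})$ from boundedness of $V^{(\kit)}$, and boundedness of $\y^{(\kit)},\p^{(\kit)}$ from the local subproblems~\eqref{eq.problemfinal} — gives that $\yiji-\z_{i,j}^{(\kit)}$, $\yijj-\z_{i,j}^{(\kit)}$ and $\z_{i,j}^{(\kit+1)}-\z_{i,j}^{(\kit)}$ all tend to $0$ a.s. Then, along any convergent subsequence, continuity (outer semicontinuity) of the optimality conditions of~\eqref{eq.problemfinal}, feasibility of the consensus constraints~\eqref{coupleeq} in the limit, and connectedness of the union graph — which ties the local copies into one consistent global point feasible for~\eqref{eq.esdp} — show that every a.s. cluster point of $(\q^{(\kit)},\la^{(\kit)})$ lies in the solution set. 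Since the supermartingale property of $V^{(\kit)}$ holds with respect to \emph{every} solution simultaneously, the stochastic quasi-Fej\'er convergence theorem then yields that the whole sequence $(\z^{(\kit)},\la^{(\kit)})$ converges a.s. to a (random) solution; the vanishing residuals propagate this to $\y^{(\kit)},\p^{(\kit)}$, giving $\|\q^{(\kit),\textrm{asy}}-\q^*\|\to0$ almost surely.

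The main obstacle I expect is the third step — certifying that a subsequential limit of the iterates is genuinely optimal. The delicate points are: (i) ADMM carries \emph{local copies} $\yiji,\yijj$ of each shared edge variable, so one must combine the random-activation structure with the connectivity of Assumption~\ref{ass.union} to argue that in the limit all copies agree and the limit is feasible for~\eqref{eq.esdp}; (ii) $f$ is only assumed convex and may be nonsmooth (e.g.\ the Laplacian case), so optimality of the limit must be extracted from the variational inequality / subgradient inclusion satisfied by the exact minimizers in~\eqref{eq.problemfinal}, passing to the limit via outer semicontinuity of the subdifferential; and (iii) one must make sure the boundedness of $V^{(\kit)}$ (hence of the dual iterates) survives the randomization, which is precisely where the supermartingale estimate, rather than a deterministic descent, is essential. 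The remaining steps are routine given~\cite{He2011,Mota2013} and the standard stochastic quasi-Fej\'er toolkit of~\cite{Iutzeler2013}.
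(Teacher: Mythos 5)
Your route is far more ambitious than the paper's: the paper proves Theorem~\ref{theo.asynch} in four lines by checking that problem~\eqref{eq.esdpsimplifiedadmm} and Assumptions~\ref{ass.ber}--\ref{ass.union} satisfy the hypotheses of \cite[Theorem 3]{Iutzeler2013} (almost-sure convergence of randomized, edge-activated ADMM in the dual domain) and then invoking \cite[Lemma 4]{Iutzeler2013} to transfer dual convergence to the primal iterates. You instead attempt to reconstruct that machinery from scratch, and the reconstruction has a genuine gap at its central step.

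The gap is the postulated \emph{per-edge decoupled} descent inequality: you assume the edge-$(i,j)$ term of $V^{(\kit)}$ drops by some $\Delta_{i,j}^{(\kit)}\geq 0$ whenever edge $(i,j)$ is activated, independently of which other edges are activated, and is unchanged otherwise. No such inequality is available here. The descent estimate behind Theorem~\ref{theo.convergence} (from \cite{He2011}, equivalently the firm nonexpansiveness of the Douglas--Rachford operator on the dual) is a \emph{global} inequality for the full synchronous sweep: the node-$i$ subproblem~\eqref{eq.argmin1} minimizes jointly over $\yiji$ for \emph{all} $j\in\mathcal{N}_i$ and over $\p_i$, and these blocks are coupled through $f_i$, through $\p_i$, and through the shared entries $\x_i$ and $Y_{ii}$ that appear in every $\z_{i,j}$ with $j\in\mathcal{N}_i$. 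When only a subset of the edges at node $i$ is active, the partial subproblem returns a different value for those shared components, so the edge-$(i,j)$ contribution to $V$ can increase even on an activated edge, and the bound $\mathbb{E}[V^{(\kit+1)}\mid\mathcal{F}^{(\kit)}]\leq V^{(\kit)}-\sum_{(i,j)\in\mathcal{E}}s_{ij}\Delta_{i,j}^{(\kit)}$ does not follow. The standard repair --- and the actual content of \cite[Theorem 3]{Iutzeler2013} --- is to use the inverse-probability-weighted Lyapunov function $\sum_{(i,j)\in\mathcal{E}}s_{ij}^{-1}(\cdot)$: with that weighting the conditional expectation of the weighted distance after a random partial update is controlled by the distance after a \emph{full} sweep plus the untouched remainder, so the global firm-nonexpansiveness inequality can be injected even though no edge-separable one exists; your unweighted $V^{(\kit)}$ cannot absorb it. Two lesser points: the quasi-Fej\'er argument must be run over a countable dense subset of the solution set (the null set depends on the reference solution), and convergence of $\p^{(\kit)}$ is not implied by vanishing edge residuals alone --- it is precisely what the dual-to-primal transfer of \cite[Lemma 4]{Iutzeler2013} supplies. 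If you switch to the weighted norm and phrase the scheme as a randomized coordinate update of the firmly nonexpansive dual ADMM operator, the remainder of your outline goes through and amounts to a proof of the cited theorem.
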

\begin{proof}
The proof is an application of~\cite[Theorem 3 and Lemma 4]{Iutzeler2013}. Consider~\cite[Theorem 3]{Iutzeler2013}: Assumption 1 is valid since in the problem~\eqref{eq.esdpsimplifiedadmm}, the sets $\mathcal{Z}_{i,j}$ and $\mathcal{P}_i$ are closed and convex, and the costs $f_i$ are proper and convex. The Assumptions~2 and 3 are our Assumptions~\ref{ass.ber} and \ref{ass.union}. Problem~\eqref{eq.esdpsimplifiedadmm} can be put as the non-smooth unconstrained problem~(2) of~\cite{Iutzeler2013} and its dual is the problem~(12) of \cite{Iutzeler2013}. With this in place, by \cite[Theorem 3]{Iutzeler2013} we have now almost sure convergence in the dual domain for Algorithm~\ref{alg.ADMM}. By \cite[Lemma 4]{Iutzeler2013} primal convergence follows, after which the claim is proven. 
\end{proof}

\subsection{Properties of Algorithm~\ref{alg.ADMM} (Computation errors)}

\begin{table*}
\centering
\caption{Analytical comparison of the available distributed algorithms. Both SGO and ADMM can be applied to the E-ML formulation.}
\label{Table.comparison}
\setlength{\extrarowheight}{3pt}
\vskip-.5cm
\begin{tabular}{lccc}
& & & \\
\toprule
& E-ML with ADMM &  SGO of \cite{Shi2010} & MVU of \cite{Simonetto2013a} \\ \hline 
Size of the Convex Problem & $ 7 |\mathcal{N}_i|+2 |\mathcal{N}_{i,\mathrm{a}}|+3$ & $4 |\mathcal{N}_i|+3$ & not applicable  \\
Computational Complexity &  $O\left((|\mathcal{N}_i|+|\mathcal{N}_{i,\mathrm{a}}|)^3\right)$& $ O\left(|\mathcal{N}_i|^3\right) $ & $O\left(|\mathcal{N}_i|^2\right)$ \\
Communication cost & $9 |\mathcal{N}_i|$ & $2|\mathcal{N}_i|$ & $O(|\mathcal{N}_i|)$\\
Type of distributed algorithm & Parallel, ADMM  & Sequential, Gauss-Seidel & Parallel, Primal-Dual Subgradient and Consensus \\
Convergence rate & $O(1/\kit)$, (Ergodic) & $ O(r^{\kit/n})$, (Actual), and $O(n/\kit)$, (Ergodic)  & $O(\tau_\textrm{mix} \log^2(n)/\kit)$, (Ergodic) \\
\bottomrule
\end{tabular}
\end{table*}

The second aspect that we consider is the limited computation capabilities of the sensor nodes. In particular, we assume that each of the subproblems~\eqref{eq.argmin1} is solved up to an accuracy $\varepsilon$, i.e., the optimal solution of each subproblem satisfies
\begin{multline}
\hskip-0.2cm0\leq \La_i(\y_{i}^{i},\p_i, \z_i^{(\kit)}\hskip-0.1cm, \la_i^{i(\kit)})-\La_i(\y_{i}^{i(\kit+1)}\hskip-0.2cm,\p_i^{(\kit+1)}\hskip-0.2cm, \z_i^{(\kit)}\hskip-0.1cm, \la_i^{i(\kit)})\leq \varepsilon, \\ \textrm{for all } \y_{i,j}^{i}\in\mathcal{Z}_{i,j},\p_i\in\mathcal{P}_i.
\label{eq.eps}
\end{multline}
The following theorem is now in place.
\begin{theorem}\label{prop.2}
Let $\bar{\q}_{\kit,\varepsilon}$ be the running solution generated with Algorithm~\ref{alg.ADMM} under the assumption that each of the subproblems~\eqref{eq.argmin1} is solved up to an accuracy $\varepsilon$, as specified by condition~\eqref{eq.eps}. Let $(\q^*, \la^*)$ be the solution of the convex problem~\eqref{eq.esdpsimplifiedadmm}. Then the following holds: 
$$
0\leq \La(\bar{\q}_{\kit,\varepsilon}, \la^*) - \La(\q^*, \la^*) \leq \displaystyle\frac{C_0}{\kit+1} + n\varepsilon,
$$
where $C_0\geq 0$ is a constant that depends on the distance of the initial guess to the optimal solution, i.e., $||{\q}^{(0)} - \q^*||$ and $||\la^{(0)}-\la^*||$, and on the parameter $\rho$. 
\end{theorem}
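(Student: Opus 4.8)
The plan is to track how an $\varepsilon$-inexact solve of each local subproblem perturbs the one-step decrease inequality that underlies the proof of Theorem~\ref{theo.convergence}(a), and then to re-sum that modified inequality. First I would recall the exact-ADMM argument of \cite{He2011}: for the exact iterates one establishes a variational inequality at $\y^{(\kit+1)},\p^{(\kit+1)}$ expressing that this pair minimizes $\sum_i \La_i(\cdot,\cdot,\z_i^{(\kit)},\la_i^{i(\kit)})$ over $\mathcal{Z}\times\mathcal{P}$, and combining it with the exact $\z$- and $\la$-updates yields, after telescoping and using convexity, the bound $\La(\bar{\q}_\kit,\la^*)-\La(\q^*,\la^*)\leq C_0/(\kit+1)$ with $C_0$ as in \eqref{theodummy2}. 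The key observation is that condition~\eqref{eq.eps} says precisely that, at each iteration and for each node $i$, the inexact pair $(\y_i^{i(\kit+1)},\p_i^{(\kit+1)})$ is within $\varepsilon$ in Lagrangian value of the exact minimizer over $\mathcal{Z}_{i,j}\times\mathcal{P}_i$; summed over the $n$ nodes this is an $n\varepsilon$ slack in the single place of the argument where optimality of the $(\y,\p)$-update is invoked.

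The main steps, in order, are: (i) write the per-iteration identity from \cite[Theorem 4.1]{He2011} but with the exact-minimization inequality for the $(\y,\p)$-block replaced by its $n\varepsilon$-relaxed version, i.e. $\La(\y^{(\kit+1)},\p^{(\kit+1)},\z^{(\kit)},\la^{(\kit)})\leq \La(\y,\p,\z^{(\kit)},\la^{(\kit)})+n\varepsilon$ for all feasible $(\y,\p)$; (ii) keep the $\z$-update \eqref{eq.z.itf} and the $\la$-update \eqref{la.it2} exact, since steps 3) and 4) of Algorithm~\ref{alg.ADMM} are closed-form and incur no error, so the quadratic/cross terms in the $\z$ and $\la$ variables telescope exactly as before; (iii) sum the resulting one-step inequality from $\kappa=0$ to $\kit$, divide by $\kit+1$, and invoke convexity of $f$ (hence of each $\La_i$ in the primal block and linearity in $\la$) via Jensen to pass to the running average $\bar{\q}_{\kit,\varepsilon}$; the telescoped primal-dual distance terms give exactly the same $C_0$ as \eqref{theodummy2}, and the $n\varepsilon$ slack, being present at every one of the $\kit+1$ iterations but divided by $\kit+1$, contributes an additive $n\varepsilon$; (iv) the lower bound $0\leq \La(\bar{\q}_{\kit,\varepsilon},\la^*)-\La(\q^*,\la^*)$ follows, as in Theorem~\ref{theo.convergence}, from the saddle-point property of $(\q^*,\la^*)$ together with primal feasibility of the running average (the sets $\mathcal{Z}_{i,j}$, $\mathcal{P}_i$ are convex and are exactly enforced in \eqref{eq.argmin1}, so the inexact iterates and their averages remain feasible).

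The step I expect to be the main obstacle is (ii)--(iii): verifying that the inexactness really only enters additively and is not amplified by the telescoping. One has to be careful that the $\varepsilon$-suboptimal $(\y^{(\kit+1)},\p^{(\kit+1)})$ is still used verbatim in forming $\z^{(\kit+1)}$ and $\la^{(\kit+1)}$, so the cross terms $\la^{i(\kit)\transp}(\yiji-\z_{i,j})$ and the penalty $\frac{\rho}{2}\|\yiji-\z_{i,j}\|_2^2$ that drive the telescoping identity are evaluated at the same inexact points on both sides of the one-step inequality; consequently the only residual is the $n\varepsilon$ from the primal block and nothing accumulates. Once this bookkeeping is done cleanly the constant $C_0$ is identical to that of Theorem~\ref{theo.convergence} and the proof is complete; indeed the result can be phrased as a direct corollary of the error-tolerant ADMM analysis, for instance \cite[Theorem 4.1]{He2011} adapted to approximate minimizers, or of \cite{Iutzeler2013}-style inexact-operator arguments, which is the route I would cite to keep the write-up short.
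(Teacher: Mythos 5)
Your proposal is correct and takes essentially the same route as the paper, whose proof is a one-line appeal to \cite[Theorem 4.1]{He2011} with their exact-minimization condition (3.5) replaced by the $\varepsilon$-relaxed condition~\eqref{eq.eps}. Your elaboration---that the per-node $\varepsilon$ slack sums to $n\varepsilon$ in the single variational inequality for the $(\y,\p)$-block, while the closed-form $\z$- and $\la$-updates keep the telescoping exact so nothing accumulates---is precisely the bookkeeping the paper leaves implicit.
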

\begin{proof}
The proof follows directly from~\cite{He2011} substituting their (3.5) with our~\eqref{eq.eps}. 
\end{proof}

Proposition~\ref{prop.2} implies that Algorithm~\ref{alg.ADMM} converges as $O(1/\kit)$ to an error floor with magnitude $n\varepsilon$. 

\subsection{Comparison of Algorithm~\ref{alg.ADMM} with Alternatives}\label{sec:comparisonadmm}

In this section, we analyze the computational complexity and the communication cost of Algorithm~\ref{alg.ADMM} and we compare it to some other distributed algorithms for convex relaxations, namely the sequential greedy optimization (SGO) algorithm of~\cite{Shi2010} and the distributed maximum variance unfolding (MVU) algorithm of~\cite{Simonetto2013a} (we leave out the approach of~\cite{Srirangarajan2008} since convergence has not been proven). The aim is to show the added value in using ADMM especially for large-scale networks. For simplicity, the nodes are located in $\mathbb{R}^2$. 

\textbf{E-ML with ADMM (Algorithm~\ref{alg.ADMM})} At each iteration, for each sensor node, the most complex operation is to solve the convex program~\eqref{eq.problemfinal}. This convex program optimizes over $\y_i^i, \p_i, \mathbold{\gamma}$ and it comprises of $7|\mathcal{N}_i| + 2|\mathcal{N}_{i,\mathrm{a}}|+3$ scalar variables, $3|\mathcal{N}_i| + 2|\mathcal{N}_{i,\mathrm{a}}|$ scalar equality/inequality constraints, and $4 |\mathcal{N}_i| + 2|\mathcal{N}_{i,\mathrm{a}}|$ LMI constraints of size at most $10 \times 10$ (which is represented by the LMI with $\gamma_{i,j}$)%
%
%--------FOOTNOTE, Operation count --------------
\footnote{In fact, assuming $\x_i\in\mathbb{R}^2$, then $\y_i^i\in\mathbb{R}^{6|\mathcal{N}_i|+3}$, $\p_i\in\mathbb{R}^{2|\mathcal{N}_{i,\mathrm{a}}|+3}$, $\mathbold{\gamma}\in\mathbb{R}^{|\mathcal{N}_i|}$, and eliminating the overlapping variables between $\y_i^i$ and $\p_i$ the count follows. Furthermore, the local optimization has a part of \eqref{eq.cons.10b} and \eqref{eq.gamma2} as equality/inequality constraints, in total $3|\mathcal{N}_i|+ 2|\mathcal{N}_{i,\mathrm{a}}|$, and the other part of \eqref{eq.cons.10b} plus \eqref{c.lmi3} and \eqref{eq.gamma1} as LMI, in total $4|\mathcal{N}_i|+ 2|\mathcal{N}_{i,\mathrm{a}}|$ LMIs of dimension at most $10 \times 10$ in the case of \eqref{eq.gamma1}.}. %
%------------------------------------------------
%
This yields a computational complexity of at least $O((|\mathcal{N}_i| + |\mathcal{N}_{i,\mathrm{a}}|)^3)$ (see~\cite{Boyd2004a} for details on operation counts). The communication cost per iteration per sensor is proportional to the number of scalar variables that sensors have to send, and each sensor has to send the updated $\y_{i,j}^j\in \mathbb{R}^{9}$ to its neighbor $j$, for each neighbor, i.e., a communication cost of $9|\mathcal{N}_i|$.

\textbf{SGO.} The SGO algorithm of~\cite{Shi2010} is sequential in nature, meaning that only one local optimization can be run at the time, and although its convergence has been argued, no formal proof has been given for the convergence rate\footnote{Coloring procedures as in~\cite{Bertsekas1997} could be employed to partially parallelize SGO. These coloring techniques depend on the availability of a coloring scheme before running the SGO. Coloring schemes are NP-hard problems, and albeit there are decentralized techniques to compute bounds, the number of iterations to achieve a given accuracy is between $O(\log(n))$ and $O(n \exp(n))$ \cite{Kuhn2006, Duffy2008}, which undermines their applicability for large-scale settings.}. Furthermore noise-resilience as well as computation error-resilience are unknown features of SGO. For ease of comparison, we consider the range-based localization SGO and ignore the anchors for simplicity. In this context, and in the case we apply SGO to our E-ML formulation, at each iteration, for the one active sensor node (given that we keep $\x_j$ fixed and there is no $\z_{i,j}$ or $\mathbold{\gamma}$ variable) the most complex operation is to solve a convex program comprising of $4|\mathcal{N}_i| + 3$ variables and $3 |\mathcal{N}_{i}|$ LMI constraints, which leads to a computational complexity of at least $O(|\mathcal{N}_{i}|^3)$ (see also~\cite{Shi2010}). The communication costs per iteration for the one active sensor is proportional to the number of scalar variables that have to be sent (the updated $\x_i$) multiplied by the number of sensor nodes they have to be sent to (the neighbors), yielding a cost of $2|\mathcal{N}_i|$. For the convergence rate, the best convergence rate that we can expect from a Gauss-Seidel algorithm (with some strong assumptions on the constraints and cost function) is linear~\cite{Bertsekas1997}, i.e., the convergence rate is $O(r^\kit)$ for a certain (problem-dependent and a priori unknown) $0<r<1$. Given that one iteration of the Gauss-Seidel comprises $n$ sub-iterations of the SGO, the convergence rate of SGO is at best $O(r^{\kit/n})$, or $O(n/\kit)$ in ergodic sense. 

\textbf{MVU.} The MVU algorithm of~\cite{Simonetto2013a} is parallel in nature, employs a primal-dual scheme with a nested consensus step, and cannot handle anchors. It is based on the decentralized spectral decomposition algorithm of~\cite{Kempe2008} and it requires each node to eventually locate all the others. At each iteration, for each sensor node, the computational complexity is at most $O(|\mathcal{N}_i|^2)$ and the communication cost at most $O(|\mathcal{N}_i|)$. The convergence rate is based on the convergence of the decentralized spectral decomposition algorithm, which requires $O(\tau_\textrm{mix} \log^2(n))$ sub-iterations ($\tau_\textrm{mix}$ is the mixing time of a random walk on the graph $\mathcal{G}$), and on the convergence rate of the primal-dual scheme, proven to be $O(1/\kit)$ in ergodic sense. 

Table~\ref{Table.comparison} collects the performed analyses and indicates that ADMM may be the best choice to increase the convergence rate, especially in the case of large-scale networks. This comes with a limited increase in communication cost, which however can always be tuned choosing the neighborhood's size. In the next section, we display what this means in simulation results along with other relevant comparisons. 

\section{Numerical Simulations}\label{sec:num}

In this section, we report several numerical comparisons for both the centralized formulation~\eqref{eq.esdp} and the distributed Algorithm~\ref{alg.ADMM}. The aim of the section is to show how the E-ML relaxation performs under various noise conditions, to support the idea that tighter relaxations perform better in terms of position error (even though they may model the noise PDF wrongly), and to display the numerical properties of the distributed Algorithm~\ref{alg.ADMM}.  

\subsection{Centralized simulations}

We consider $2$-dimensional problems and we use the benchmark \texttt{test10-500} available online at \url{http://www.stanford.edu/~yyye/}, where the sensor nodes are randomly distributed in the unit box $[-0.5, 0.5]^2$. We let $\xi_{i,\ell}$ be the position error of sensor node $i$ for a certain realization of the noise $\ell$, i.e., $\xi_{i,\ell}:=||\hat{\x}_{i,\ell}-\x_i||_2$, where $\hat{\x}_{i,\ell}$ is the estimated position and $\x_i$ is the true position. We consider the position root mean square error ($\mathrm{PRMSE}$) as a metric of performance for the proposed convex relaxations, i.e., 
\begin{equation}
\mathrm{PRMSE} = \sqrt{\frac{\sum_{\ell=1}^L \sum_{i\in\mathcal{V}} \xi_{i,\ell}^2}{L}},
\end{equation}
where $L$ is the total number of noise realizations. Along with this metric, we consider the worst case maximum error, i.e., 
\begin{equation}
\mathrm{ME} = \max_{i\in\mathcal{V}, \ell\in[1,L]} \xi_{i,\ell},
\end{equation}
and we compute the Cram\'er-Rao lower bound (CRLB) as a comparison benchmark as in~\cite{Patwari2005}. \vskip0.2cm

\textbf{Gaussian noise setting.} (Figures~\ref{fig.comparison} and \ref{fig.comparison2}) In the first example, we focus on a Gaussian noise setting. We fix the maximum number of neighbors for each sensor node to $3$, we set the number of anchors to $m=5$, we consider additive white noise of the same standard deviation $\sigma_{i,j} = \sigma_{i,k,\mathrm{a}}$ for all the measurements, and we average over $50$ realizations. 

In Figure~\ref{fig.comparison}, we compare the E-ML approach, i.e., the problem~\eqref{eq.esdp} with cost function $f_{\mathrm{GN}}(\mathbold{\delta},\mathbold{\epsilon},\d,\e)$, with the ESDP relaxation of~\cite{Wang2008} (considered to be the state-of-the-art in convex relaxations) by increasing the number of sensor nodes $n$ and keeping all the other parameters the same ($\sigma_{i,j} = \sigma_{i,k,\mathrm{a}} = 0.1$). As we can see, the performance of E-ML is better than the one of ESDP, albeit only slightly. Furthermore, as one could expect, by increasing $n$ we average out the noise, which in turn means a better average performance and less difference among the two schemes. 

In Figure~\ref{fig.comparison2}, we study the performance of the E-ML approach and of the ESDP relaxation by increasing the noise value, for $n = 8$. As we can see, the performance of E-ML is again slightly better than the one of ESDP, and the difference increases with the noise value (notice that the graph is in logarithmic scale). 

%As for the error variance for both algorithms and for both comparisons (not depicted in the figures), it is also slightly greater for the ESDP case. \vskip0.2cm
 
\begin{figure}
\centering
\footnotesize
\psfrag{a}{\hskip-1.4cm Number of sensor nodes $n$}
\psfrag{b}{\hskip-0.8cm Position Error [-]}
\psfrag{c}{ESDP \cite{Wang2008}, $\mathrm{ME}$}
\psfrag{d}{E-ML, problem~\eqref{eq.esdp}, $\mathrm{ME}$}
\psfrag{e}{$\sqrt{\textrm{CRLB}}/n$}
\psfrag{f}{ESDP \cite{Wang2008}, $\mathrm{PRMSE}/n$}
\psfrag{g}{E-ML, problem~\eqref{eq.esdp}, $\mathrm{PRMSE}/n$}
\includegraphics[width=0.475\textwidth]{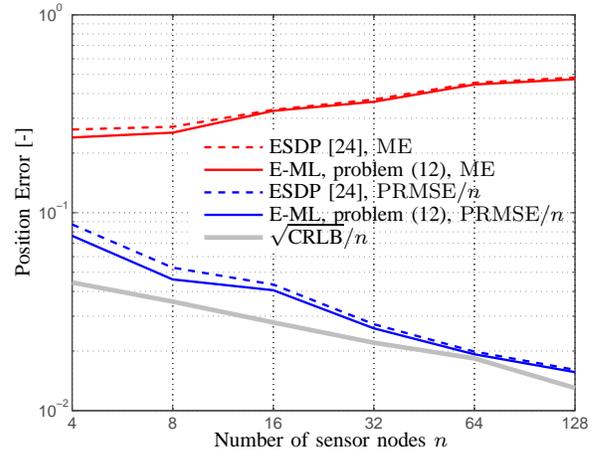}
\caption{Comparison between E-ML relaxation~\eqref{eq.esdp} and ESDP relaxation of~\cite{Wang2008} in the Gaussian noise setting for different values of the sensor node number $n$ and fixed $\sigma_{i,j} = \sigma_{i,k,\mathrm{a}} = 0.1$. }
\label{fig.comparison}
\end{figure}

\begin{figure}
\centering
\footnotesize
\psfrag{a}{\hskip-2.4cm Noise standard deviation $\sigma_{i,j} = \sigma_{i,k,\mathrm{a}}$ [-]}
\psfrag{b}{\hskip-0.8cm Position Error [-]}
\psfrag{c}{ESDP \cite{Wang2008}, $\mathrm{ME}$}
\psfrag{d}{E-ML, problem~\eqref{eq.esdp}, $\mathrm{ME}$}
\psfrag{e}{$\sqrt{\textrm{CRLB}}/n$}
\psfrag{f}{ESDP \cite{Wang2008}, $\mathrm{PRMSE}/n$}
\psfrag{g}{E-ML, problem~\eqref{eq.esdp}, $\mathrm{PRMSE}/n$}
\includegraphics[width=0.475\textwidth]{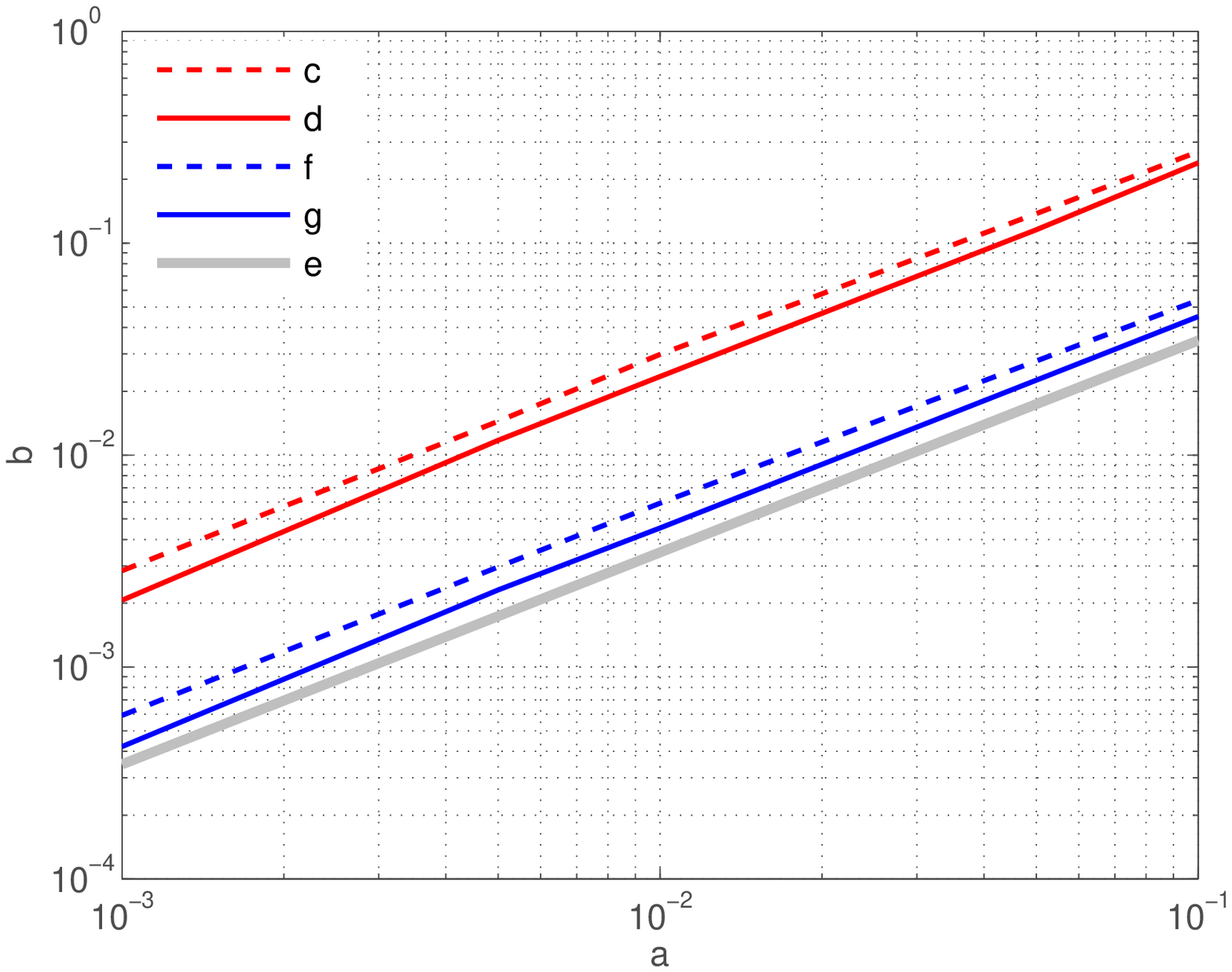}
\caption{Comparison between E-ML relaxation~\eqref{eq.esdp} and ESDP relaxation of~\cite{Wang2008} in the Gaussian noise setting for different values of the measurement noise standard deviation $\sigma_{i,j} = \sigma_{i,k,\mathrm{a}}$ and fixed $n = 8$. }
\label{fig.comparison2}
\end{figure} 

\begin{figure}
\centering
\footnotesize
\psfrag{a}{\hskip-2.4cm Noise standard deviation $\sigma_{i,j} = \sigma_{i,k,\mathrm{a}}$ [-]}
\psfrag{b}{\hskip-0.8cm $\mathrm{PRMSE}/n$ [-]}
\psfrag{c}{Lapl. E-ML~\eqref{eq.esdp}, $n = 8$}
\psfrag{d}{ESDP~\cite{Wang2008}, $n = 8$}
\psfrag{f}{GN E-ML~\eqref{eq.esdp}, $n = 8$}
\psfrag{g}{$\sqrt{\textrm{CRLB}}/n$, $n = 8$}
\psfrag{e}{Lapl. E-ML~\eqref{eq.esdp}, $n = 64$}
\psfrag{h}{ESDP~\cite{Wang2008}, $n = 64$}
\psfrag{i}{GN E-ML~\eqref{eq.esdp}, $n = 64$}
\psfrag{l}{$\sqrt{\textrm{CRLB}}/n$, $n = 64$}
\includegraphics[width=0.475\textwidth]{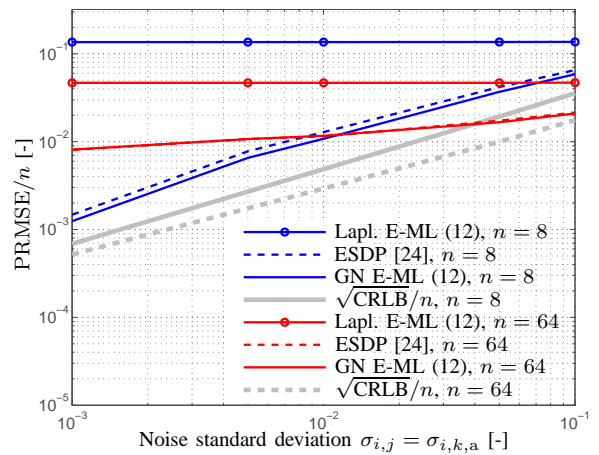}
\caption{Comparison between Laplacian E-ML relaxation, i.e., \eqref{eq.esdp} with $f_{\mathrm{L}}(\d,\e)$, ESDP relaxation of~\cite{Wang2008}, and Gaussian E-ML relaxation, i.e.,~\eqref{eq.esdp} with $f_{\mathrm{GN}}$, in the Laplacian noise setting for different values of the sensor node number $n$ and different noise values $\sigma_{i,j} = \sigma_{i,k,\mathrm{a}}$. }
\label{fig.comparison3}
\end{figure}
 
\vskip0.2cm
\textbf{Laplacian noise setting.} (Figure~\ref{fig.comparison3}) In this second example, we focus on a Laplacian noise setting. Also in this example, we fix the maximum number of neighbors for each sensor node to $3$, and we set $m=5$ and $L = 50$.

In Figure~\ref{fig.comparison3}, we compare the Laplacian E-ML relaxation, i.e., problem~\eqref{eq.esdp} with cost function $f_{\mathrm{L}}(\d,\e)$, the Gaussian E-ML relaxation, i.e., problem~\eqref{eq.esdp} with cost function $f_{\mathrm{GN}}(\mathbold{\delta},\mathbold{\epsilon},\d,\e)$, and the ESDP relaxation of~\cite{Wang2008}. We use the modified version of the CRLB of~\cite{Leng2010} as a benchmark, since Laplacian distributions are not differentiable. We vary both the number of sensor nodes $n$ and the noise value. As we can see, although the Laplacian E-ML relaxation correctly models the noise distribution, it performs worse than the other convex relaxations. The reason is that it is not derived from a rank-$D$ relaxation and therefore it is a ``looser'' relaxation with respect to the other ones considered in this example. %A possible way to alleviate this is to consider a modification of $f_{\mathrm{L}}(\d,\e)$ as the following
%\begin{equation}
%f_{\mathrm{L}}'(\d,\e,\delta,..) = f_{\mathrm{L}}(\d,\e)+\sigma, 
%\end{equation}
%which brings the optimal  $\delta_{i,j}^*$ and .. closer to (even though equality cannot be guarantee in general). The effect of such modification is also displayed in Figure~\ref{fig.comparison3} for $n=8$ and $\sigma=0.1$, and shows a promising gain in performance.

\subsection{Distributed simulations}

We use the same setting of the centralized simulations (i.e., anchor number $m = 5$, and maximum number of neighbors for each sensor node is $3$) and we consider Gaussian noise. We test Algorithm~\ref{alg.ADMM} based on ADMM for different values of $n$, computation accuracy $n \varepsilon$, and asynchronous communication. In order to generate the computation error, we set \texttt{sedumi.eps} to $\varepsilon$, which is an upper bound\footnote{SeDuMi considers this tolerance to be related also to feasibility and not only optimality, as we do, see~\cite{Sturm1998} for details.} for our definition of $\varepsilon$. We set the regularization parameter $\rho = 0.3$. We first focus on synchronous communication and then on the asynchronous implementation. \vskip0.2cm

\textbf{Synchronous case.} (Figures~\ref{fig.distr1}, \ref{fig.distr2}, and \ref{fig.distr3})
Figures~\ref{fig.distr1}, \ref{fig.distr2}, and \ref{fig.distr3} collect the synchronous communication results for Algorithm~\ref{alg.ADMM} and confirm the $O(1/\kit)$ convergence of ADMM. 

In Figures~\ref{fig.distr1} and \ref{fig.distr2}, we fix the centralized problem as the relaxation~\eqref{eq.esdp} with cost function $f_{\mathrm{GN}}(\mathbold{\delta},\mathbold{\epsilon},\d,\e)$ and we compare the convergence of Algorithm~\ref{alg.ADMM} to the centralized solution (Theorem~\ref{theo.convergence}) with the one of SGO applied to the same centralized problem. We also show the effect of computation inaccuracies (Theorem~\ref{prop.2}) supporting our theoretical findings. As we can see, by comparing SGO with the ADMM approach, we notice the slower convergence of the former (for a large-scale setting) due to its sequential nature (in fact, in the case of SGO, at each iteration $\kit$ we update only one sensor node position). We see also that SGO is resilient to computation inaccuracies (at least in this simulation), it seems to have a linear type of convergence (as argued), and it may be a choice in case of small-size networks. Further studies are however necessary to certify the reliability of SGO to a broader class of scenarios.  

Figure~\ref{fig.distr3} represents the sensor node locations computed as the solution of Algorithm~\ref{alg.ADMM} for different iterations $\kit$. The algorithm is initialized with $\X^{(0)} = \textbf{0}$ and then run till $\kit = 400$. The ``trajectories'' of the running averaged variables $\bar{\X}_{\kit}$ (i.e., the position part of the $\bar{\q}_\kit$ vector~\eqref{eq.runningaverages}) as a function of the iteration number $\kit$ are displayed. As we can see, for $\kit = 400$, Algorithm~\ref{alg.ADMM} is practically converged onto the real sensor node locations. 

%they suggest the idea that also in the case of noisy communication links, ADMM converges as $O(1/\kit)$ to an error-floor proportional to the error covariance (also in accordance with the simulation results of~\cite{Schizas2008}).
\begin{figure}
\centering
\footnotesize
\psfrag{a}{\hskip-0.5cm Iteration $\kit$}
\psfrag{b}{\hskip-1.5cm $\La({\q}^{(\kit)},\lambda^*) - \La(\q^*,\lambda^*)$}
\psfrag{d}{$1/\kit$ line}
\psfrag{c}{$n = 8$, exact}
\psfrag{e}{$n = 32$, $n \varepsilon = 0.02$}
\psfrag{f}{$n = 128$, $n \varepsilon = 0.02$}
\psfrag{g}{\textsc{sgo}, $n = 8$, exact}
\psfrag{h}{\textsc{sgo}, $n = 128$, $n \varepsilon = 0.02$}
\includegraphics[width=0.475\textwidth]{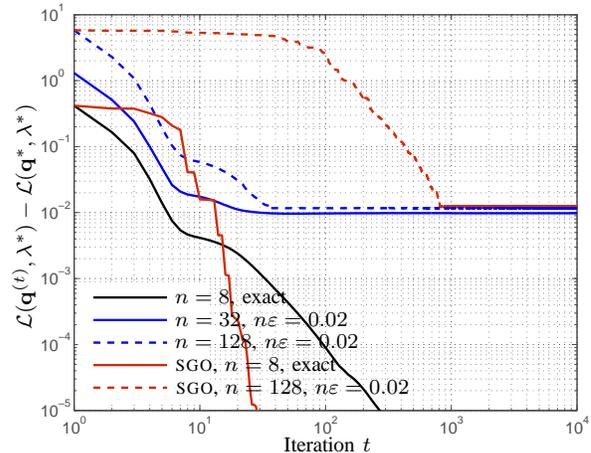}
\caption{Actual objective convergence of Algorithm~\ref{alg.ADMM} in different settings and comparison with SGO solving the same E-ML problem~\eqref{eq.esdp}.}
\label{fig.distr1}
\end{figure}

\begin{figure}
\centering
\footnotesize
\psfrag{a}{\hskip-0.5cm Iteration $\kit$}
\psfrag{b}{\hskip-1.5cm $\La(\bar{\q}_\kit,\lambda^*) - \La(\q^*,\lambda^*)$}
\psfrag{d}{$1/\kit$ line}
\psfrag{c}{$n = 8$, exact}
\psfrag{e}{$n = 32$, $n \varepsilon = 0.02$}
\psfrag{f}{$n = 128$, $n \varepsilon = 0.02$}
\psfrag{g}{\textsc{sgo}, $n = 8$, exact}
\psfrag{h}{\textsc{sgo}, $n = 128$, $n \varepsilon = 0.02$}
\includegraphics[width=0.475\textwidth]{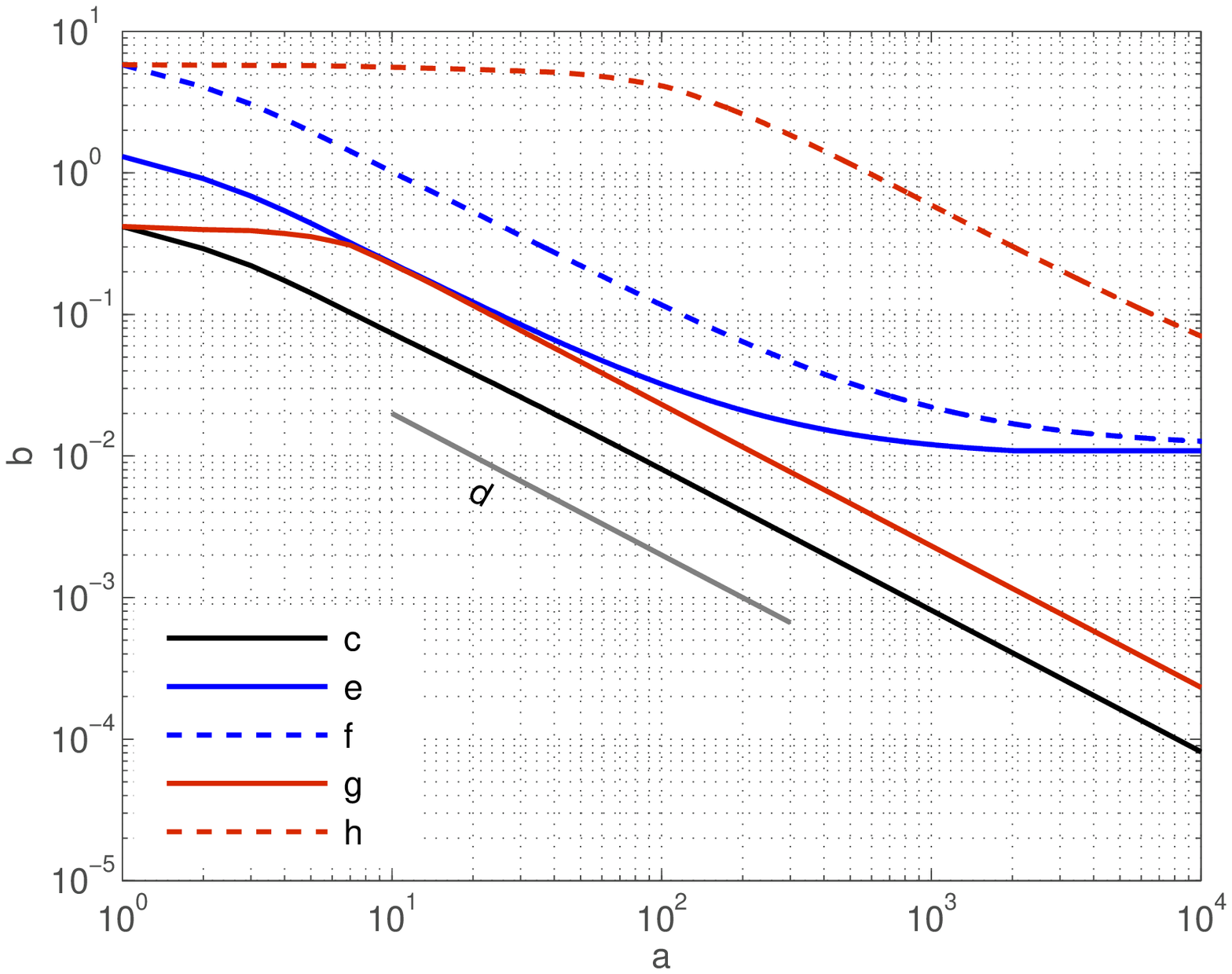}
\caption{Ergodic objective convergence of Algorithm~\ref{alg.ADMM} in different settings and comparison with SGO solving the same E-ML problem~\eqref{eq.esdp}.}
\label{fig.distr2}
\end{figure}

\begin{figure}
\centering
\footnotesize
\psfrag{a}{\hskip-0.2cm $x_1$}
\psfrag{b}{\hskip-0.2cm $x_2$}
\psfrag{c}{Anchors' positions}
\psfrag{d}{Sensor nodes' real positions}
\psfrag{e}{Distr. solution at iteration $\kit = 400$}
\psfrag{f}{Distr. solution from iteration $\kit = 0$ till $\kit = 400$}
%\psfrag{c}{S-ML relaxation, average and max}
%\psfrag{d}{ESDP relaxation, average and max}
\includegraphics[width=0.475\textwidth]{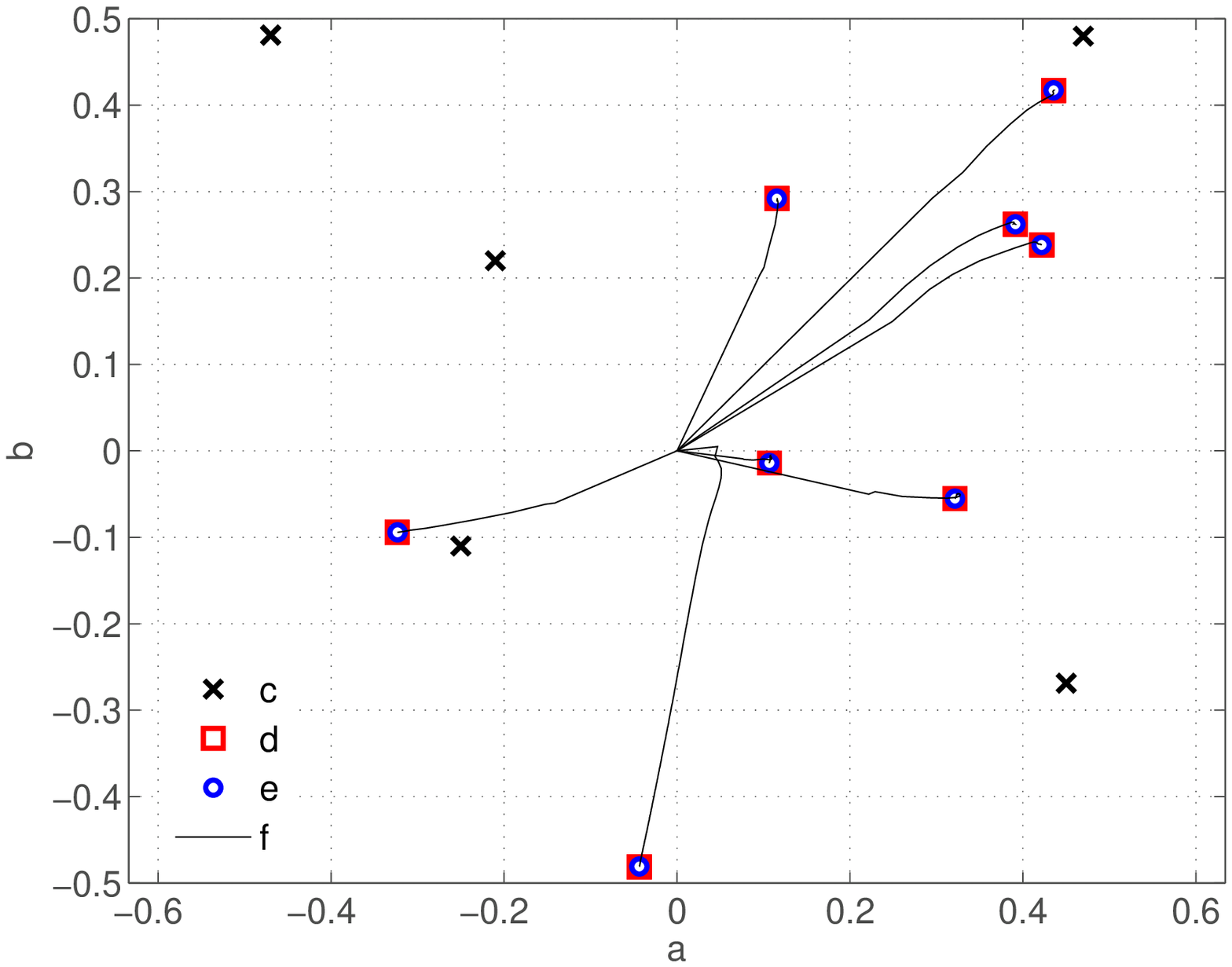}
\caption{Position solutions $\bar{\X}_\kit$ as a function of $\kit$ plotted as trajectories (using Algorithm~\ref{alg.ADMM} in its exact form) from $\kit = 0$ till $\kit = 400$. The values of $\bar{\X}_{400}$ (blue circles) are practically coincident with the real sensor node positions (red squares). The crosses represent the anchors.}
\label{fig.distr3}
\end{figure}

\vskip0.2cm
\textbf{Asynchronous case.} (Figure~\ref{fig.distr4}) For the asynchronous communication case, we use the same setting as the synchronous scenario and we consider different values for the number of sensor nodes $n$ and probability $s_{ij}$ (Assumption~\ref{ass.ber}). 

In Figure~\ref{fig.distr4}, the results are displayed. In particular, we have depicted the distance between the primal solution from Algorithm~\ref{alg.ADMM}, $\q^{(\kit)}$, and the optimal value found using the centralized problem~\eqref{eq.esdp}, i.e., $\q^*$. As we expect, Algorithm~\ref{alg.ADMM} converges to the optimal primal solution of the centralized problem~\eqref{eq.esdp} (Theorem~\ref{theo.asynch}). 

\begin{figure}
\centering
\footnotesize
\psfrag{a}{\hskip-0.5cm Iteration $\kit$}
\psfrag{b}{\hskip-0.4cm $||\q^{(\kit)}-\q^*||$}
\psfrag{c}{$n=8$, $s_{ij} = 1.00$}
\psfrag{d}{$n=8$, $s_{ij} = 0.75$}
\psfrag{e}{$n=8$, $s_{ij} = 0.25$}
\psfrag{f}{$n=32$, $s_{ij} = 0.25$}
%\psfrag{c}{S-ML relaxation, average and max}
%\psfrag{d}{ESDP relaxation, average and max}
\includegraphics[width=0.475\textwidth]{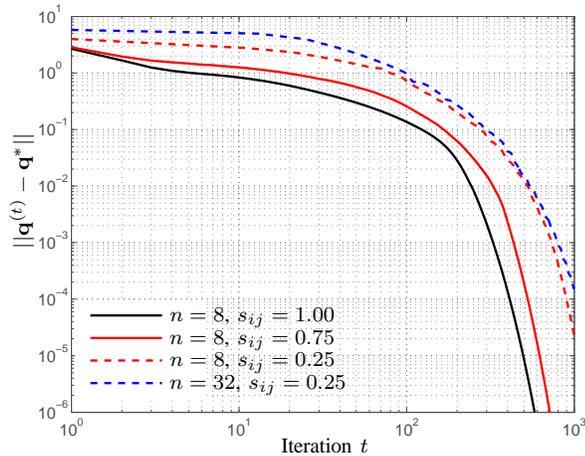}
\caption{Actual primal convergence of Algorithm~\ref{alg.ADMM} to the optimizer of the centralized E-ML problem~\eqref{eq.esdp} for asynchronous communication.}
\label{fig.distr4}
\end{figure}

\section{Conclusions}\label{sec:concl}

We have studied the sensor network localization problem. We have argued that employing convex relaxations based on a maximum likelihood formulation to massage the original non-convex formulation can offer a powerful handle on computing accurate solutions. In order to take full advantage of this aspect, we have shown that the relaxation has to be as tight as possible to the original non-convex problem, (in some cases, disregarding the noise model). Furthermore, we have discussed a distributed implementation of the resulting convex relaxation via the ADMM. By exploiting the analytical properties of ADMM (convergence rate, asynchronism-resilience, computation error-resilience), we have studied the resulting distributed algorithm showing its added value with respect to available techniques, especially in large networks. 

Among future research plans, we are interested in studying mobile sensor network localization problems by using convex relaxations based on a maximum a posteriori formulation of the estimation problem.

\ifCLASSOPTIONcaptionsoff
  \newpage
\fi

\bibliographystyle{ieeetr}
\bibliography{../../PaperCollection2}

\end{document}